\declaretheorem[style=definition,numberwithin=section]{theorem}
\declaretheorem[style=definition,numberwithin=section]{lemma}
\declaretheorem[style=definition,numberwithin=section]{proposition}
\declaretheorem[style=definition]{property}
\newtheorem*{theorem*}{Theorem}
\newtheorem*{property*}{Property}
\newtheorem{reduction}{Reduction}
\DeclareMathOperator*{\argmax}{argmax}
\DeclareMathOperator*{\argmin}{argmin}
\newcommand{\process}{\textsc{Process}}
\newcommand{\quickstream}{\textsc{QuickStream}}
\newcommand{\quickmax}{\textsc{QuickSwap}}
\newcommand{\algnm}{\textsc{QuickSwapNM}}
\newcommand{\quickpm}{\textsc{QuickSwapPM}}
\newcommand{\excha}{\textsc{ExchangeCandidate} }
\newcommand{\epsi}{ \varepsilon }
\newcommand{\oh}[1]{\mathcal{O}\left( #1 \right)}
\newcommand{\ohs}[2]{\mathcal{O}_{#1}\left( #2 \right)}
\newcommand{\opt}{\textsc{Opt}}
\newcommand{\mon}{\textsc{MSM}}
\newcommand{\nmon}{\textsc{GSM}}
\newcommand{\pmat}{\textsc{MSPM}}
\newcommand{\uni}{\mathcal N}
\newcommand{\ind}{\mathcal I}
\newcommand{\mat}{\mathcal M}
\title{Submodular Maximization in Exactly $n$ Queries}
\author{Eric Balkanski  \\
Columbia University \\
New York City, New York \\
\texttt{eb3224@columbia.edu}
\and Steven DiSilvio \\
MIT \\
Cambridge, Massachusetts \\
\texttt{disilvio@mit.edu }
\and 
  Alan Kuhnle \\                                     
  Texas A\&M University\\                                                 
  College Station, Texas \\                                                        
  \texttt{kuhnle@tamu.edu}
  \and
  ChunLi Peng\thanks{Authors are ordered alphabetically.} \\                                     
  Texas A\&M University\\                                                 
  College Station, Texas \\                                                        
  \texttt{chunli.peng@tamu.edu}
}
\date{August 21, 2024}
\begin{document}

\maketitle

\begin{abstract}
In this work, we study the classical problem of maximizing a submodular function subject to a matroid constraint. We develop deterministic algorithms that are very parsimonious with respect to querying the submodular function, for both the case when the submodular function is monotone and the general submodular case. In particular, we present a $1/4$ approximation algorithm for the monotone case that uses exactly one query per element, which gives the same total number of queries $n$ as the number of queries required to compute the maximum singleton. For the general case, we present a constant factor approximation algorithm that requires $2$ queries per element, which is the first algorithm for this problem with linear query complexity in the size of the ground set. 
\end{abstract}
\textbf{Version 2 update.} The same algorithm and $1/4$ approximation result for the monotone case were previously obtained by~\citet{dutting2023fully}. At the time of writing this manuscript we were not aware of this other paper. We do generalize the algorithm to the non-monotone case, for which we achieve a constant approximation, and to the family of $p$-matchoid constraints, for which we achieve a $1/(4p)$ approximation. Due to the base algorithm for the monotone case being identical as in~\citet{dutting2023fully}, we view the technical contribution of this manuscript limited.

\section{Introduction}

\paragraph{Submodular optimization.} Many objectives that we aim to optimize in machine learning, such as coverage, diversity, and entropy, satisfy the diminishing returns property required for a function to be submodular. Submodular maximization algorithms are thus employed in applications such as  document summarization~\citep{lin2011class}, influence maximization in networks~\citep{kempe2003maximizing}, recommender systems~\citep{badanidiyuru2014streaming}, and feature selection~\citep{das2011submodular}.
A fundamental problem in this field is to maximize a monotone submodular function under a matroid constraint, which we refer to as \mon{}. 
 For this problem, the celebrated greedy algorithm of~\citet{fisher1978analysis}  achieves a $1/2$ approximation. This $1/2$ approximation was improved by~\citet{calinescu2011maximizing} to $1-1/e$ with a continuous greedy algorithm, which is the best approximation achievable with polynomially many queries~\citep{nemhauser1978best}. The broader problem of maximizing a, not necessarily monotone, submodular function under a cardinality constraint, which we refer to as \nmon{}, is also known to  admit constant factor approximation algorithms \citep{lee2009non}. 

\paragraph{Fast algorithms for \mon{}.} Due to the many applications of submodular maximization   over massive datasets,
a major focus has been to develop fast algorithms for submodular maximization (see, e.g., \citep{Badanidiyuru2014b,Mirzasoleiman2015,Ene2019,Breuer2020,Kuhnle2021a,Li2022}).
 Since the time required to perform function evaluations usually dominates other parts of the
computation, the speed of an algorithm is often measured by its query complexity, i.e., its number of function evaluations. In particular, the query complexity of the greedy algorithm is $kn$, where $k$ is the rank of the matroid and $n$ is size of the ground set.
\citet{Badanidiyuru2014b} improved this query complexity  with an  algorithm with  $\oh{ (n/\epsi) \log( k / \epsi )}$ query complexity that achieves a $1/2 - \epsi$ approximation for \mon{}. Subsequently, \citet{Chakrabarti2015b} achieved a linear query complexity with a
$1/4$-approximation algorithm that makes at most $2n$ queries. 
On the hardness side, \citet{Kuhnle2021a} showed that $n$ queries are required to obtain an approximation  of $1/2 + \epsi$, even for the special case of finding the maximum singleton.

\paragraph{Fast algorithms for \nmon{}.}  If the objective function is not monotone,
there are no known algorithms with linear query complexity. \citet{lee2009non} obtained the first constant factor approximation algorithm for \nmon{}, with an algorithm that has query complexity  $\tilde{O}(n^4)$. An algorithm of \citet{Kuhnle2019e}
achieves a ratio $1/4 - \epsi$ in $\oh{ \frac{n}{\epsi} \log \frac{k}{\epsi} }$ queries for the special case of a size constraint, 
and this algorithm was extended to handle a matroid constraint
by \citet{Han2020}, keeping the same ratio and query complexity. Since the rank $k$ may be as large as $n$, the query complexity of these algorithms is superlinear in the size of the ground set. \newline

The main question we ask in this paper is whether the best-known query complexity for these two problems can be improved while achieving a constant factor approximation guarantee.

\begin{center}
\emph{ What is the query complexity of achieving a constant factor approximation for \mon{} and \nmon{}?}
\end{center}

\paragraph{Our results.} Our first main result shows that $n$ queries are sufficient to achieve a constant factor approximation with a deterministic algorithm for \mon{}.

\begin{theorem*}
There is a deterministic $1/4$-approximation algorithm for \mon{} with query complexity $n$.    
\end{theorem*}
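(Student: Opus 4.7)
The plan is to design a one-pass streaming swap-based algorithm that spends exactly one query per element, saving a factor of two over the $2n$-query algorithm of Chakrabarti and Kale. Process the ground set in arbitrary order $e_1, \ldots, e_n$, maintaining a current independent set $S \in \mathcal{I}$ and, for every $e \in S$, a stored credit $g_e$ recording the value $f(e \mid T_e) = f(T_e + e) - f(T_e)$ measured by the single query made when $e$ arrived (where $T_e$ was the current solution at that time). When $e_i$ arrives, spend the one permitted query on $\Delta_i := f(e_i \mid S)$. If $S + e_i \in \mathcal{I}$, add $e_i$ with $g_{e_i} := \Delta_i$; otherwise let $e^{\star}$ be the smallest-credit element in the unique matroid circuit of $S + e_i$ and perform the swap $S \leftarrow S + e_i - e^{\star}$ precisely when $\Delta_i \ge c \cdot g_{e^{\star}}$ for a suitable constant $c \ge 2$. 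No extra query is needed for the swap rule because $g_{e^{\star}}$ serves as a (free) upper bound on the current contribution of $e^{\star}$ via submodularity.

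For the analysis I would combine two standard ingredients. First, a matroid exchange between the returned $S$ and any optimal independent set $O$: by the augmentation property there is a bijection $\pi : O \setminus S \to S \setminus O$ with $S - \pi(o) + o \in \mathcal{I}$ for every $o$. Second, a partition of $O$ into three classes charged separately: (i) $o \in S$ at termination, which contributes directly to $f(S)$; (ii) $o$ arrived and was rejected, so the failed swap test gives $f(o \mid S_i) < c \cdot g_{\pi(o)}$, and submodularity lifts this to a bound on $f(o \mid S)$; (iii) $o$ was inserted but later discarded during some swap, in which case the swap inequality $\Delta_j \ge c \cdot g_{o}$ guarantees that $f(S)$ strictly grew by at least $(c-1)/c$ of $g_{o}$, so the total stored credit of ever-discarded elements telescopes into $f(S)$. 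Combining the three bounds with the decomposition $f(O) \le f(O \cup S) \le f(S) + \sum_{o \in O \setminus S} f(o \mid S)$ and tuning $c$ should yield $f(O) \le 4 f(S)$.

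The main technical obstacle is that after a sequence of swaps the stored credit $g_e$ need not remain a valid upper bound on the current $f(e \mid S - e)$: removing an element from $S$ can, by submodularity, \emph{increase} the current marginal value of an earlier-inserted $e$ above its recorded $g_e$, which would invalidate the ``free'' use of $g_{e^{\star}}$ in the swap test. Handling this cleanly will require either redefining the credits so they remain monotone throughout the execution (for instance by using a singleton-style bound, or the minimum of $\Delta_j$ values along $e$'s history in $S$), or charging the cumulative overestimation to the genuine growth of $f(S)$ across swaps via a potential-function argument. Threading this bookkeeping through the matroid exchange while keeping the ratio at exactly $1/4$ is what makes the proof more delicate than its $2n$-query predecessor.
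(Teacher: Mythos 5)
Your plan follows the right general shape (one-pass swapping, circuit-based eviction, charge $O$ against the output), but it has a genuine gap, and it is exactly the one you flag at the end: with credits $g_e$ defined as marginals to the \emph{feasible} solution at arrival time, a credit stops being a valid proxy for an element's current contribution as soon as any element is later removed from $S$, since removals can only increase marginals under submodularity. Your case (iii) telescoping (``$f(S)$ grew by at least $(c-1)/c$ of $g_o$'') and the ``free'' use of $g_{e^\star}$ in the swap test both rely on this invalid bound, so the argument as written does not close. The paper's key idea is precisely the fix you did not find: define $\delta_e = f(e \mid A)$ where $A$ is the \emph{infeasible} set of all elements ever accepted, which only grows over the run. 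One query per element still suffices because $f(A)$ is known from the previous accepted element's query. Monotone growth of $A$ makes submodularity cut the right way in both directions: for an element $e_j$ kept in the final $A'$, its true contribution $f(e_j \mid A'_n \cap \{e_1,\dots,e_{j-1}\})$ is at least $\delta_{e_j}$ (since that prefix of $A'_n$ is contained in $A_{j-1}$), giving $f(A') \ge \tfrac{\beta}{1+\beta} f(A)$; and for any element of $O$, its marginal to the final $A$ is at most its $\delta$, giving $f(O\cup A)-f(A) \le \sum_{o\in O\setminus A}\delta_o$. No potential-function repair of feasible-set credits is needed.

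A second, independent problem is your charging scheme. The swap test at $o$'s arrival compares $\Delta_o$ to the minimum-credit element of the circuit of $S_i + o$ \emph{at time $i$}, not to $\pi(o)$ for a static exchange bijection $\pi : O\setminus S \to S\setminus O$ with the final solution; the circuit element that blocked $o$ may itself be evicted later, and $\pi(o)$ need not even have been present at time $i$, so case (ii) does not follow from the failed test. The paper instead builds a DAG over swap events (an edge from the loser of each comparison to the members of the circuit), proves it is acyclic, that sinks are exactly the final $A'$, that every non-sink is spanned by its out-neighbors, and that weights grow (up to one factor $1+\beta$) along paths; a lemma of Feldman et al.\ then yields an injection $\phi : O \to A'$ with $\delta_{\phi(o)} \ge \delta_o/(1+\beta)$, which is what replaces your cases (ii) and (iii). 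Combining the two lemmata gives $f(O) \le \tfrac{(1+\beta)^2}{\beta} f(A')$ and $\beta = 1$ gives the $1/4$ ratio. So your proposal identifies the right obstacles but resolves neither; both the infeasible-reference-set trick and the chain-of-evictions injection are essential missing ingredients.
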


This result improves the query complexity of
\citet{Chakrabarti2015b} from $2n$ to $n$, while maintaining the
same approximation. We emphasize that our algorithm, called \quickmax{}, does a pass over the elements and performs a \emph{single} query per element. The main idea needed to perform only a single query per element $e$ requires maintaining an  \emph{infeasible} solution and evaluating the marginal contribution of $e$ to this infeasible solution, whereas the vast majority of algorithms for submodular maximization only maintain a feasible solution. In fact, there have been lower bounds for submodular maximization that make the assumption that the algorithm only queries feasible sets~\citep{norouzi2018beyond,kupfer2020adaptive}, and this has been considered a natural assumption. Whether our result can be achieved by only querying feasible sets is an interesting question for future work.  This $n$ query complexity matches the number of queries required to find the element $e$ with maximum singleton value $f(\{e\})$~\cite{Kuhnle2021a}.
We generalize this algorithm to the $p$-matchoid constraint, and obtain:
\begin{theorem*} 
    There is a deterministic $1/(4p)$-approximation algorithm for \pmat{} with query complexity $n$.    
\end{theorem*}

Our third main result is the first constant factor approximation algorithm with linear query complexity for \nmon{}.
\begin{theorem*}
There is a deterministic  $1/(6 + 4\sqrt{2}) \approx 1/ 11.66$-approximation algorithm for \nmon{} with query complexity $2n$.    
\end{theorem*}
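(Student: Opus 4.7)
The plan is to extend \quickmax{} to the non-monotone setting by running two interleaved copies in a single pass over the ground set, maintaining two disjoint candidate (possibly infeasible) solutions $A_1, A_2$, and performing exactly one marginal query per side per element, for a total of $2n$ queries. For each incoming element $e$, the algorithm queries only the side $i$ against which $e$ is currently a candidate and applies the threshold-plus-swap update from \quickmax{} to decide whether $e$ replaces some element of $A_i$; a routing rule (for example, alternating sides or load-balancing by current size) will ensure $A_1 \cap A_2 = \emptyset$. The output is $\arg\max\{f(A_1), f(A_2)\}$, and matroid feasibility of each $A_i$ is inherited from the \quickmax{} swap rule applied independently on each side.

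For the analysis, I would first replay the \quickmax{} argument separately on each side to obtain a per-side inequality of the form $f(A_i) \geq \alpha \, f(A_i \cup \opt) - \tau k$, where $\tau$ is the swap threshold (a constant multiple of $f(A_i)/k$) and $k$ is the matroid rank. The central non-monotone step is to apply submodularity to the pair $(A_1 \cup \opt, A_2 \cup \opt)$, whose intersection is $\opt$ because $A_1 \cap A_2 = \emptyset$; this yields
\[
f(A_1 \cup \opt) + f(A_2 \cup \opt) \geq f(\opt) + f(A_1 \cup A_2 \cup \opt) \geq f(\opt),
\]
so at least one side satisfies $f(A_i \cup \opt) \geq f(\opt)/2$. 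Summing the two per-side inequalities and balancing the threshold $\tau$ against $\max\{f(A_1), f(A_2)\}$ produces a quadratic whose optimum yields $\max\{f(A_1), f(A_2)\} \geq f(\opt) / (2+\sqrt{2})^2 = f(\opt)/(6+4\sqrt{2})$.

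The main obstacle is controlling the loss from non-monotonicity while staying within the $2n$ query budget: in the monotone case $f(\opt) \leq f(A_i \cup \opt)$ is free, whereas here the second solution $A_2$ must pull double duty, both serving as a backup when $\opt$-elements are rejected by $A_1$ and preserving the disjointness needed for the submodular inequality above. I expect the delicate part of the accounting to be charging each $\opt$-element rejected on side $i$ either to an element of $A_i$ (because it lost a swap) or to an element of $A_{3-i}$ (because it was routed to the other side), so that all rejected marginals are absorbed into $\tau k$ and the claimed ratio emerges cleanly from the balance equation.
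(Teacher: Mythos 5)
Your high-level architecture matches the paper's (two disjoint copies of \quickmax{} run in a single pass, output $\argmax\{f(A'_1), f(A'_2)\}$, and a submodularity-plus-disjointness step of the form $f(\opt) \le f(\opt \cup A_1) + f(\opt \cup A_2)$), but there is a genuine gap in the routing rule, and it is exactly the place where the difficulty you flag at the end lives. You propose to ensure disjointness by an arbitrary rule such as alternation or load-balancing, and then to ``replay the \quickmax{} argument separately on each side'' to get a per-side inequality relating $f(A_i \cup \opt)$ to $f(A_i')$. That per-side inequality is not available: the injection underlying Lemma~\ref{lem:primeopt} only charges optimal elements that were actually \emph{processed by copy $i$} (their weights $\delta$ are marginals with respect to $A_i$), so optimal elements routed to the other copy cannot be charged to $A_i'$ at all. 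With alternation or load-balancing there is no relationship between $f(o \mid A_{1})$ and $f(o \mid A_{2})$ at the time $o$ arrives, so a rejected optimal element sent to side $2$ may have an arbitrarily large marginal with respect to side $1$, and your ``charge it to an element of $A_{3-i}$'' step has nothing to support it. The paper resolves this by routing each element to the copy that assigns it the \emph{larger} marginal gain; then for every $o \in O_B$ one has $f(o \mid A_{i-1}) \le f(o \mid B_{i-1})$, and the cross-side charge goes through, yielding the mixed bound $f(O \cup A) - f(A) \le (1+\beta)\bigl(f(A') + f(B')\bigr)$ (Lemma~\ref{lem:nmmain}) in which \emph{both} feasible solutions appear — not a clean per-side inequality. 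Combining this with Lemma~\ref{lem:primehalf} and optimizing $\beta = 1/\sqrt{2}$ gives $6 + 4\sqrt{2}$; your constant comes out right only because you assumed the unproven per-side bound.

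Two further issues follow from the same choice. First, your query accounting is internally inconsistent: you claim ``one marginal query per side per element'' (which is $2n$) but also that the algorithm ``queries only the side $i$ against which $e$ is currently a candidate'' (which is $n$). The marginal-gain routing rule is precisely what forces two queries per element, since the algorithm must compute both $f(e \mid A)$ and $f(e \mid B)$ to decide where to send $e$; with one-sided querying you cannot even implement the routing that the analysis needs. Second, your per-side inequality $f(A_i) \ge \alpha f(A_i \cup \opt) - \tau k$ with an additive threshold term $\tau k$ is not the form the \quickmax{} analysis produces — the swap condition is relative ($\delta_e \ge (1+\beta)\delta_{a^*}$), and the guarantee is purely multiplicative via Lemmas~\ref{lem:primehalf} and~\ref{lem:primeopt} — so even on a single side the inequality you intend to ``replay'' would need a different derivation.
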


The previous best query complexity achieved by a constant factor approximation algorithm for \nmon{} is $\oh{\frac{n}{\epsi} \log \frac{k}{\epsi}}$. In addition to achieving the first linear query complexity algorithm for this problem, we emphasize that another  benefit of our algorithm is that the query complexity does not depend on large constants, which is important for the practicality of the algorithm. We emphasize that the algorithm and its analysis for \nmon{} build on the \quickmax{} algorithm for  \mon{}. A similar approach could obtain a constant factor approximation for the general $p$-matchoid constraint.

Finally, we empirically demonstrate the practicality of \quickmax{}. On real and synthetic datasets, it always achieves an improved number of queries and a similar objective value as the algorithm of \citet{Chakrabarti2015b} with $2n$ queries. Compared to the lazy greedy algorithm and the algorithm of~\citet{Badanidiyuru2014b}, it achieves a significant improvement in the number of queries at a small cost in the objective value.

\subsection{Related work}

\paragraph{Linear-Time Algorithms for Size and Knapsack Constraints.}
For the special case of monotone submodular maximization under a  size constraint,
two works \citep{Kuhnle2021a, Li2022} independently achieved an $\oh{(n / \epsi) \log( 1/ \epsi)}$ query complexity and a
nearly optimal $1-1/e-\epsi$ approximation ratio. These are the first deterministic algorithms to achieve nearly
the $1 - 1/e$ ratio for size constraints with linear query complexity.
\citet{Li2022} also provide an $\ohs{\epsi}{n \log n}$ algorithm for the intersection of matroid and
$\ell$ knapsack constraints; however, for the case of a single matroid and zero knapsacks,
this algorithm reduces to the $1/2 - \epsi$-approximation algorithm in \citep{Badanidiyuru2014b} previously discussed.

\paragraph{Relationship to \citet{Kuhnle2021a}.}
For \mon{} under size constraint,
\citet{Kuhnle2021a} provides the algorithm
\quickstream{}, a $(1/4)$-approximation algorithm
in exactly $n$ queries. Our algorithm shares
some common features with this algorithm: 
both algorithms query an infeasible set to determine
whether to add an element. However, \quickstream{} uses
multiple ideas that are specific to size constraint which
do not generalize to matroid constraints. Most importantly,
it relies upon the fact that the last $k$ elements added
to the infeasible set form a feasible set. This fact, together
with the condition to add an element, is crucial for proving
the $1/4$ ratio of \quickstream{}. Unfortunately,
for the matroid constraint, one cannot find
a feasible set in this way.
A different strategy for maintaining a feasible subset
of the infeasible set is required, together with a different
strategy for adding an element. 

We also note that for size constraint, \citet{Kuhnle2021a} gave a  technique to transform any $\alpha$-approximation deterministic algorithm with query complexity $q$ into an $\alpha/c$-approximation deterministic algorithm with query complexity $q/c$, for any integer $c$. However, this technique does not work  for matroid constraints. Another technique to reduce the query complexity at the cost of the approximation ratio is to run an algorithm on a randomly sampled subset of the ground set. However, the approximation guarantees with this approach only hold in expectation, instead of deterministically, and the sampling also causes a loss in approximation. In particular, for \nmon{}, this technique cannot be used with an existing algorithm to achieve a constant approximation and a linear query complexity.  

\paragraph{Relationship to \citet{Chakrabarti2015b}.}
As mentioned above, \citet{Chakrabarti2015b} developed a streaming algorithm
for the monotone problem that achieves the same ratio as our algorithm
in at most $2n$ queries. Their algorithm takes one pass through the ground set and
maintains a feasible set through the following swapping logic.
Each element is assigned a weight (which requires at most two
queries to the oracle), and
the feasible solution is updated via appealing to an
algorithm of \citet{Ashwinkumar2011} for maximum (modular) weight
independent set. By contrast, our monotone algorithm employs
a single query to an infeasible set to determine whether two elements should be swapped.
In our empirical evaluation (Section \ref{sec:exp}), we show that the two algorithms obtain a similar
objective value, but our algorithm uses  fewer queries. 

\paragraph{Relationship to \citet{Feldman2018a}.}
\citet{Feldman2018a} developed several streaming algorithms
for the general problem \nmon{}. These algorithms also take one
pass through the ground set and decide whether it makes sense to
swap out an existing element for a new candidate. These algorithms
employ many queries to the submodular function to determine if
a swap should be made: $O(k)$ queries are required per element in
the worst case, where $k$ is the rank of the matroid. By contrast,
our general algorithm makes two queries to two infeasible sets
to determine if an element should be swapped. However, we should
note that the graph constructions required to prove our approximation
ratios are inspired by the graph constructions used in the analysis
of these algorithms. 

\paragraph{Faster algorithms that achieve optimal ratio for \mon{}.}
Several works have improved the number of queries needed to obtain the optimal $1 - 1/e$ ratio
 for \mon{} \citep{Badanidiyuru2014b,Buchbinder2015,Kobayashi2024}.
Very recently, a ratio of
$1 - 1/e - \epsi$ was achieved  with $ \mathcal O (\sqrt{k} \, n \, \text{poly}(1/\epsi , \log n))$ queries \citep{Kobayashi2024}.

\subsection{Preliminaries}

A function $f : 2^\mathcal{N} \rightarrow \mathbb{R}_{\geq 0}$ is \textit{submodular} if it satisfies the following diminishing returns property: for all sets $S \subseteq T \subseteq \mathcal{N}$ and any element $a \in \mathcal{N} \setminus T$, we have $f(a|S) \geq f(a|T)$, where $f(a|S) = f(S + a) - f(S)$ is the marginal contribution of $a$ to $S$. Equivalently, $f$ is submodular if $f(S) + f(T) \geq f(S \cup T) + f(S \cap T)$ for all sets $S, T \subseteq \mathcal N$. It is \textit{monotone} if $f(S) \leq f(T)$ for all sets $S \subseteq T \subseteq \mathcal{N}$.

Let $\mathcal I \subseteq 2^\mathcal{N}$ be a collection of subsets of $\mathcal{N}$. Then $\mathcal M = (\mathcal N, \mathcal I)$ is a \textit{matroid} if the following three properties are satisfied: (1)
 $\emptyset \in \mathcal I$, \, (2)
 for all sets  $S \subseteq T \subseteq \mathcal{N}$, if $T \in \mathcal I$ then $S \in \mathcal I$ (downward-closed property), and \, (3) for all sets $S, T \subseteq \mathcal{N}$ such that $|S| < |T|$, there exists $e \in T \setminus S$ such that $S + e \in \mathcal{I}$ (augmentation property).

In the submodular maximization under a matroid constraint problem (\textit{\nmon{}}), we are given a submodular function $f : 2^\mathcal{N} \rightarrow \mathbb{R}_{\geq 0}$  and a matroid $\mathcal{M} = (\mathcal N, \mathcal I)$, and the goal is to approximately solve $\max_{S \in \mathcal I} f(S)$. In the monotone submodular maximization under a matroid constraint problem (\textit{\mon{}}), the function $f$ is assumed to also be monotone. A set $S \in \mathcal I$ is interchangeably called  independent or feasible. We assume without loss of generality that $\{e\} \in \mathcal I$ for all $e \in \mathcal N$.

The algorithm is given access to a value oracle for $f$, i.e., it can query the value $f(S)$ of any set $S \subseteq \mathcal N$, as well as an independence oracle for $\mathcal M$, i.e., it can test  whether $S \in \mathcal I$ or $S \not \in \mathcal I$ for any set $S \subseteq \mathcal N$.  In submodular maximization, the main bottleneck for the running time is typically the function evaluations $f(S)$, so we are interested in designing algorithms for \mon{} and \nmon{} with low \textit{query complexity}, which is the worst-case number of queries made by the algorithm to the value oracle for $f$.

\section{An approximation algorithm for \mon{} in exactly $n$ queries}

In this section, we present our $(1/4)$-approximation algorithm
for \mon{} that uses exactly $n$ queries.
Monotonicity is used in only one place in the analysis,
and we also invoke the lemmata proved here in Section \ref{sec:nm},
where we develop a constant-factor algorithm for \nmon{} with $2n$  query complexity. 

\paragraph{Description of the algorithm.} In overview, the algorithm
makes a single pass through the ground
set, and each element is swapped
into the solution $A'$ if it is good enough
relative to the element it displaces. The novelty
lies in the fact that each element is considered
to be swapped into $A'$ \textit{only once}; and
the definition of good enough relies upon a single
query of the marginal gain of the element to an
infeasible set $A \supseteq A'$ when it arrives.

Specifically, the algorithm maintains two sets, $A' \subseteq A$; $A'$ is
always a feasible solution, and $A$ contains all elements that
were once a member of $A'$. 
On Line \ref{line:delta-weight}, the weight $\delta_e$ of element $e$ is
defined as its marginal gain into $A$: $\delta_e = f( e | A )$.
Since $f(A)$ is already known, computing the gain requires a single query:
$f(A + e)$. 
The weight $\delta_e$ is fixed upon arrival of $e$ and never recomputed, and is used for
all processing related to $e$. Next, the best candidate
to swap $e$ into $A'$ is found, according to the weights
$\delta$. This element, denoted $a^*$, is chosen to be
a smallest weight element such
that $A' \setminus a^* + e$ is feasible; that is,
$a^* = \argmin_{a \in A'| A'\setminus a + e \in \ind} \delta_a$.
If $\delta_e$ is large enough relative to $\delta_{a^*}$,
then the swap occurs: $A'$ is updated to $A' \setminus a^* + e$,
and $A$ is updated to $A+ e$ (and the value of $f(A)$ is updated
to $f(A) + \delta_e$). 
\vspace{0.4cm}
 
\begin{algorithm}[H]
  \DontPrintSemicolon
  \textbf{Input:} Oracle to $f$, ground set $\uni$,
  independence oracle for $\mat = (\uni, \ind)$, parameter $\beta$ \; 
 $A, A' \gets \emptyset$ \;
 \For{$e \in \uni$}{
  $\delta_e \gets f(e | A )$ \label{line:delta-weight}\;
  \If{$A' + e \in \ind$ and $\delta_e \geq 0$}{
    $A \gets A + e$\;
    $A' \gets A' + e$\;
  }
  \Else{
  $a^* \gets \argmin_{a \in A'| A' -  a + e \in \ind} \delta_a$ \;
  \If{$\delta_e \geq (1 + \beta)\delta_{a^*}$}{
    $A \gets A + e$\;
    $A' \gets A' - a^* + e$
  }
  }
 }
 Return $A'$
 \caption{\quickmax{}: A $\frac{1}{4}$-approximation algorithm for monotone submodular maximization under a matroid constraint}
 \label{algo:original-matroid}
\end{algorithm}

\subsection{Overview of analysis}
The analysis proceeds by
first relating
$f(A)$ and $f(A')$ (Lemma \ref{lem:primehalf}),
then $\opt = f(O)$, $f(A)$, and $f(A')$ (Lemma \ref{lem:primeopt}).
We state these two lemmata, then prove the approximation ratio. 
Subsequently, we prove the lemmata in Sections \ref{lem:ph-proof} and \ref{lem:po-proof}, respectively.
Both lemmata hold for general, submodular functions, which
will be needed in Section \ref{sec:nm}. Finally, we show that
the $1/4$ ratio is tight, with a set of tight examples in Section \ref{sec:tight}.

Lemma \ref{lem:primehalf} relates $f(A')$ and $f(A)$; intuitively, because
$A' \subseteq A$ and $\delta_e = f(e | A)$, then by submodularity
and the condition to swap $a^*$ for $e$, the $f$-value of $A'$ increases by a
constant fraction of the increase in the $f$-value of $A$, despite the loss from $a^*$. 
\begin{restatable}{lemma}{primehalf}
\label{lem:primehalf}
Let $(f,\mat)$ be an instance of \nmon{},
and let $A',A$ be produced by Alg. \ref{algo:original-matroid} on
this instance.
Then
$f(A') \geq \frac{\beta f(A)}{1 + \beta}$.
\end{restatable}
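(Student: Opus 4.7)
The strategy is to introduce the weights $\delta_e$ as an intermediate quantity and prove the bound in two steps: first relate $f(A)$ and $\sum_{a' \in A'} \delta_{a'}$ via a step-by-step bookkeeping argument driven by the swap condition, then relate $\sum_{a' \in A'} \delta_{a'}$ and $f(A')$ via submodularity. Observe first that the weights telescope: if we order $A = \{a_1, a_2, \ldots, a_m\}$ by arrival, then $\delta_{a_i} = f(a_i \mid \{a_1, \ldots, a_{i-1}\})$ and hence $f(A) = \sum_{i=1}^{m} \delta_{a_i}$ (assuming $f(\emptyset) = 0$, standard WLOG).

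Let $\Phi := \sum_{a' \in A'} \delta_{a'}$, tracked through the execution. I claim $\Delta \Phi \geq \frac{\beta}{1+\beta}\,\Delta f(A)$ at every iteration. When nothing is added, both sides vanish. When $e$ is added without a swap, $\Delta f(A) = \delta_e \geq 0$ and $\Delta \Phi = \delta_e$, so the inequality is immediate since $\beta/(1+\beta) \leq 1$. In the swap case, $\Delta f(A) = \delta_e$ and $\Delta \Phi = \delta_e - \delta_{a^*}$, and the swap condition $\delta_e \geq (1+\beta)\delta_{a^*}$ rearranges to $\delta_{a^*} \leq \delta_e/(1+\beta)$, which yields exactly $\delta_e - \delta_{a^*} \geq \delta_e \cdot \beta/(1+\beta)$. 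Summing over all iterations gives $\Phi \geq \frac{\beta}{1+\beta} f(A)$ at termination.

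It remains to show $f(A') \geq \Phi$. Enumerate the final set $A' = \{b_1, \ldots, b_\ell\}$ in arrival order and let $B_j = \{b_1, \ldots, b_j\}$. By telescoping, $f(A') = \sum_{j=1}^{\ell} f(b_j \mid B_{j-1})$. The key observation is that because each element of $\uni$ is processed at most once and elements, once added to $A$, remain in $A$, every $b_i$ with $i < j$ is already in $A$ at the moment $b_j$ arrives; thus $B_{j-1} \subseteq A_{<b_j}$, where $A_{<b_j}$ denotes the snapshot of $A$ when $b_j$ is considered. Submodularity then gives $f(b_j \mid B_{j-1}) \geq f(b_j \mid A_{<b_j}) = \delta_{b_j}$. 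Summing over $j$ yields $f(A') \geq \sum_{j} \delta_{b_j} = \Phi \geq \frac{\beta}{1+\beta} f(A)$, as required.

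The main obstacle is conceptual rather than computational: one has to recognize that the right intermediate quantity is the \emph{weighted} sum $\sum_{a \in A'} \delta_a$, since the swap condition is naturally stated in terms of the $\delta$-weights but does not directly compare $f$-values of the old and new $A'$. The subtle point I would want to verify carefully is that $B_{j-1} \subseteq A_{<b_j}$ holds for all survivors $b_j$, which uses both the monotone growth of $A$ and the fact that survivors are added to $A'$ in the same order as they arrive; note this step only uses submodularity, so the lemma applies in the non-monotone setting as stated.
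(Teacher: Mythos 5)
Your proof is correct and follows essentially the same route as the paper's: the per-iteration potential argument on $\Phi = \sum_{a \in A'} \delta_a$ is exactly the paper's helper lemma (with the same three cases), and your containment $B_{j-1} \subseteq A_{<b_j}$ plus submodularity is precisely the paper's claim that $A'_n \cap \{e_1,\ldots,e_{j-1}\} \subseteq A_{j-1}$. The only cosmetic difference is that you track $f(A)$ directly rather than $\sum_{a \in A}\delta_a$ (these coincide since each addition increases $f(A)$ by exactly $\delta_e$), and you explicitly flag the $f(\emptyset)=0$ normalization that the paper uses implicitly.
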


Next, Lemma \ref{lem:primeopt} establishes a relationship
between $f(O \cup A)$, $f(A)$, and $f(A')$. Intuitively,
the rejected elements $O \setminus A$ can each be mapped
to an element of $A'$ responsible for the rejection. The
key fact, which much of the proof is devoted to showing,
is that this mapping is injective. That is, each element $o$
of $O \setminus A$ can be mapped to a unique element of $A'$,
which may be thought of as gatekeeper for the element $o$.
To prove the mapping is an injection, a graph construction is employed.
\begin{restatable}{lemma}{primeopt}
  \label{lem:primeopt}
  Let $(f,\mat)$ be an instance of \nmon{}
  with optimal solution $O$,
  and let $A',A$ be produced by Alg. \ref{algo:original-matroid}
  on this instance.
  Then $f(O \cup A) \le f(A) + (1 + \beta)f(A')$.
\end{restatable}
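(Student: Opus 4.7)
The plan is to reduce the inequality to a combinatorial claim — the existence of an injection from the rejected optimal elements into the final $A'$ — and then build that injection through a graph argument.

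First, I would telescope twice using submodularity. On the left, $f(O \cup A) - f(A) \le \sum_{o \in O \setminus A} f(o \mid A)$. For each $o \in O \setminus A$, let $A_o$ be the state of $A$ right before $o$ is processed; since $A_o \subseteq A$ and $o \notin A_o$, submodularity gives $f(o \mid A) \le f(o \mid A_o) = \delta_o$. Dropping non-positive summands only weakens the bound, so it suffices to show $\sum_{o \in O \setminus A,\, \delta_o > 0} \delta_o \le (1+\beta)\, f(A')$. On the right, list the final $A' = \{a_1, \dots, a_m\}$ in order of insertion; because an element of the final $A'$ is inserted once and never removed, $\{a_1, \dots, a_{i-1}\} \subseteq A$ at the moment $a_i$ is inserted, so submodularity gives $f(a_i \mid \{a_1, \dots, a_{i-1}\}) \ge \delta_{a_i}$ and hence $f(A') \ge \sum_{a \in A'} \delta_a$. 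It therefore suffices to construct an injection $\pi : \{o \in O \setminus A : \delta_o > 0\} \to A'$ with $\delta_o < (1+\beta)\, \delta_{\pi(o)}$ for every $o$ in its domain.

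To identify candidate targets, observe that for any $o$ in the domain, $\delta_o \ge 0$ forbids rejection in the feasible branch, so at the moment $o$ arrives one has $A'_o + o \notin \ind$ and the algorithm picks a gatekeeper $a^*_o \in A'_o$ with $A'_o - a^*_o + o \in \ind$ and $\delta_o < (1+\beta)\, \delta_{a^*_o}$. The naive assignment $o \mapsto a^*_o$ fails on two counts: distinct $o$'s may share a gatekeeper, and $a^*_o$ may subsequently be swapped out of $A'$. My plan is to build a bipartite graph with left side $\{o : \delta_o > 0\}$ and right side the final $A'$, connecting $o$ to every $a \in A'$ that is reachable from $a^*_o$ along the forward chain of swaps the algorithm performs. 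Each such swap replaces an element $a^*$ by an element $e$ with $\delta_e \ge (1+\beta)\, \delta_{a^*}$, so any endpoint $a \in A'$ reached from $a^*_o$ satisfies $\delta_a \ge \delta_{a^*_o} > \delta_o/(1+\beta)$, exactly the quality required. I would enrich the neighborhood of $o$ by using every element of the fundamental circuit of $o$ in $A'_o + o$ as a starting point for such a chain, since the matroid circuit axiom guarantees $A'_o - a + o \in \ind$ for every $a$ in that circuit.

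The main obstacle I anticipate is verifying Hall's condition on this bipartite graph, namely $|N(S)| \ge |S|$ for every subset $S$ of left vertices. The argument must leverage the matroid exchange axiom at each time-indexed set $A'_o$ and rule out excessive merging of swap chains, so that the reachable endpoints in the final $A'$ collectively absorb $S$. Once Hall's condition is established, an injection $\pi$ with the required quality exists, and chaining it with the two telescoping inequalities yields $\sum_{o : \delta_o > 0} \delta_o < (1+\beta) \sum_{a \in A'} \delta_a \le (1+\beta)\, f(A')$, which is exactly the statement of the lemma.
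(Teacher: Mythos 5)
Your reduction of the lemma to the existence of an injection is essentially the same as the paper's easy half: the two telescoping/submodularity steps you describe are exactly how the paper passes from $f(O\cup A)-f(A)$ to $\sum_{o\in O\setminus A}\delta_o$ and from $\sum_{a\in A'}\delta_a$ to $f(A')$ (you would also need the small observation that $\delta_a\ge 0$ for every $a\in A'$, so that padding the sum over the image of $\pi$ up to all of $A'$ is harmless). The genuine content of the lemma, however, is the existence of the injection, and that is precisely where your proposal stops: you name the verification of Hall's condition as ``the main obstacle I anticipate'' and offer no argument for it. In the paper this is Lemma~\ref{lemma:graph-construction}, proved by building a directed graph in which, at every iteration where a swap decision is made, the element $u_i$ that ends up outside $A'_i$ receives edges to \emph{all} other members of the unique circuit of $e_i$ in $A'_{i-1}+e_i$; the circuit property makes every non-sink spanned by its out-neighborhood, the graph is acyclic, its sinks are exactly the final $A'$, and the injection then comes from the nontrivial matroid lemma of Feldman et al.\ (Lemma~\ref{lemm:machinery}). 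Without that engine, or an equivalent Hall-type argument, the proof is incomplete.

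Moreover, the specific bipartite graph you propose is provably too sparse, so the plan cannot be completed as stated: your neighborhoods start from the circuit of $o$ at the moment $o$ arrives and then follow only the direct ``replaced-by'' chain $a^*\mapsto e$, never picking up the other circuit members at later swaps. Take $\beta=1$, a modular $f$, and the graphic matroid on vertices $\{1,2,3\}$, with elements arriving as $a=(1,2)$ of weight $1$, $w=(2,3)$ of weight $1$, $o_1=(1,2)$ of weight $1.9$, $y=(1,3)$ of weight $2.1$, $o_2=(1,3)$ of weight $4$. The algorithm accepts $a$ and $w$; rejects $o_1$ (circuit $\{o_1,a\}$, and $1.9<2\delta_a$); swaps $y$ for $a$ (circuit $\{a,w,y\}$, $a^*=a$); rejects $o_2$ (circuit $\{o_2,y\}$, and $4<2\delta_y$). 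The final solution is $A'=\{w,y\}$ and $\{o_1,o_2\}$ is independent (indeed optimal), yet in your graph $N(o_1)=N(o_2)=\{y\}$: the only starting point for $o_1$ is $a$, whose replacement chain ends at $y$, and $w$ is never reachable, so Hall's condition fails for $S=\{o_1,o_2\}$. The paper's construction avoids exactly this: when $a$ is swapped out it also gets an edge to the co-circuit element $w$, with $\delta_a\le\delta_w$ by minimality of $a^*$, which is the enrichment your chains are missing and what makes the spanning hypothesis of Lemma~\ref{lemm:machinery} true.
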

The approximation follows directly from these two lemmata, as summarized in the following theorem. 
\begin{theorem}
    Algorithm~\ref{algo:original-matroid} is a $(1/4)$-approximation algorithm for \mon{} with query complexity $n$.
\end{theorem}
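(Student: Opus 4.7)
The plan is to combine the two lemmas with monotonicity of $f$ to obtain a bound on $f(A')$ in terms of $f(O)$, then optimize the free parameter $\beta$. The query complexity bound is immediate from the algorithm's structure.

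First I would observe that since $f$ is monotone and $A \supseteq \emptyset$, we have $f(O) \leq f(O \cup A)$. This is the only place monotonicity is used, which is consistent with the remark that both Lemma~\ref{lem:primehalf} and Lemma~\ref{lem:primeopt} hold for general submodular $f$.

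Next, chaining this with Lemma~\ref{lem:primeopt} gives $f(O) \leq f(A) + (1+\beta) f(A')$. Rearranging Lemma~\ref{lem:primehalf} yields $f(A) \leq \frac{1+\beta}{\beta} f(A')$, so substituting produces
\[
f(O) \;\leq\; \frac{1+\beta}{\beta} f(A') + (1+\beta) f(A') \;=\; \frac{(1+\beta)^2}{\beta}\, f(A').
\]
Setting $\beta = 1$ minimizes $\frac{(1+\beta)^2}{\beta}$ over $\beta > 0$ (the derivative is $\frac{\beta-1}{\beta^2} \cdot (1+\beta)$, vanishing at $\beta = 1$), and the minimum value is $4$. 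Therefore $f(A') \geq \frac{1}{4} f(O)$, establishing the approximation ratio.

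For the query complexity, I would simply inspect Algorithm~\ref{algo:original-matroid}: the only place the value oracle for $f$ is invoked is Line~\ref{line:delta-weight}, which computes $\delta_e = f(e \mid A) = f(A+e) - f(A)$; since $f(A)$ is maintained incrementally as a running value (updated by adding $\delta_e$ whenever a swap occurs), this requires exactly one call to $f$ per element, namely $f(A+e)$. The subsequent comparison of $\delta_e$ to $(1+\beta)\delta_{a^*}$ uses only previously recorded weights and the independence oracle, neither of which counts toward the value-oracle query complexity. Hence the algorithm uses exactly $n$ queries, and there is essentially no obstacle once the two lemmas are in hand — the theorem follows from their direct combination with the optimal choice $\beta = 1$.
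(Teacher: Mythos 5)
Your proposal is correct and follows essentially the same route as the paper: monotonicity gives $f(O) \le f(O \cup A)$, Lemma~\ref{lem:primeopt} and Lemma~\ref{lem:primehalf} are chained to get $f(O) \le \frac{(1+\beta)^2}{\beta} f(A')$, and $\beta = 1$ yields the $1/4$ ratio, while the query bound comes from the same observation that $f(A)$ is already known when $f(A+e)$ is queried. No gaps to report.
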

\begin{proof}
  Let $e_i$, $A_i$, and $A'_i$ denote the element $e$, the set $A$, and the set $A'$ at iteration $i$ of the algorithm.
For the query complexity, note that  at iteration $i$, the algorithm evaluates $f(e_i|A_{i-1}) = f(A_{i-1} + e_i) - f(A_{i-1})$. Let $e_j$ be the last element added to $A$ such that $j < i$. If there is no such $j$, then $A_{i-1} = \emptyset$ and $f(A_{i-1}) = 0$. Otherwise, we have $A_{i-1} = A_j = A_{j-1} + e_j$ and query $f(A_{i-1}) = f(A_{j-1} + e_j)$ was already performed at iteration $j$. Thus, only one query to $f$ is needed at each iteration and the query complexity is $n$. For the approximation, observe that
\begin{align*}
f(O) \leq_{(1)} f(O \cup A) 
<_{(2)} f(A) + (1 + \beta)f(A') 
\leq_{(3)} \frac{(\beta + 1)^2}{\beta }f(A'), 
\end{align*}
where $(1)$ is by monotonicity, $(2)$ is by Lemma~\ref{lem:primeopt}, and $(3)$ is by Lemma~\ref{lem:primehalf}.
The $1/4$ ratio follows from optimizing over $\beta \in [0, \infty)$ (the ratio is optimized at $\beta = 1$).
\end{proof}

\subsection{Proof of Lemma~\ref{lem:primehalf}} \label{lem:ph-proof}

We first give a helper lemma, which relates the change in the sum of the
weights of elements in $A'$ to the same sum for $A$, for a single iteration.
\begin{restatable}{lemma}{half}
\label{lem:half} Let $f$ be a submodular function (not necessarily monotone). Then, for any $i \in [n],$ we have that 
    $\sum_{e_j \in A'_i } \delta_{e_j}- \sum_{e_j \in A'_{i-1} } \delta_{e_j} \geq \frac{\beta}{1 + \beta}\left(\sum_{e_j \in A_i } \delta_{e_j} - \sum_{e_j \in A_{i-1} } \delta_{e_j} \right).$
\end{restatable}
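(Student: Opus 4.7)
The plan is a direct case analysis on the three possible behaviors of Algorithm~\ref{algo:original-matroid} at iteration $i$. Writing $\Delta' := \sum_{e_j \in A'_i}\delta_{e_j} - \sum_{e_j \in A'_{i-1}}\delta_{e_j}$ and $\Delta := \sum_{e_j \in A_i}\delta_{e_j} - \sum_{e_j \in A_{i-1}}\delta_{e_j}$, the goal is to establish $\Delta' \geq \frac{\beta}{1+\beta}\Delta$ in each case. Notably, submodularity of $f$ is never actually invoked; the lemma is really a bookkeeping identity about the algorithm's decision rule, which is why it applies verbatim to the general (non-monotone) setting of Section~\ref{sec:nm}.

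\textbf{Case 1 (first branch fires).} Then $A_i = A_{i-1}+e_i$, $A'_i = A'_{i-1}+e_i$, and $\delta_{e_i} \geq 0$ by the guard. Thus $\Delta = \Delta' = \delta_{e_i} \geq 0$, and since $\beta/(1+\beta) \leq 1$ the inequality is immediate.

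\textbf{Case 2 (swap branch fires).} Then $A_i = A_{i-1}+e_i$ and $A'_i = A'_{i-1}-a^*+e_i$, so $\Delta = \delta_{e_i}$ and $\Delta' = \delta_{e_i} - \delta_{a^*}$. The inequality $\Delta' \geq \frac{\beta}{1+\beta}\Delta$ rearranges algebraically to $\delta_{e_i} \geq (1+\beta)\delta_{a^*}$, which is exactly the guard the algorithm verified before performing the swap. This rearrangement is sign-agnostic, so it remains valid even if $\delta_{e_i}$ or $\delta_{a^*}$ is negative.

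\textbf{Case 3 (no update).} Both sets are unchanged (either the else branch is entered with the swap condition failing, or no candidate $a^*$ exists), so $\Delta' = \Delta = 0$ and the claim is trivial.

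There is essentially no main obstacle: the swap threshold $(1+\beta)\delta_{a^*}$ was chosen precisely so that the per-iteration accounting absorbs the loss of $a^*$ from $A'$ while still preserving a $\beta/(1+\beta)$ fraction of the corresponding gain in $A$. The only mild care required is to notice that Case~2's rearrangement does not depend on signs, so the three cases combine uniformly. Summing the per-iteration inequality over $i$ will then telescope, and combined with the identity $f(A) = \sum_{e_j \in A}\delta_{e_j}$ (which uses how each $\delta_{e_j}$ was defined as a marginal into the current $A$) together with submodularity to relate $\sum_{e_j \in A'}\delta_{e_j}$ to $f(A')$, will yield Lemma~\ref{lem:primehalf}.
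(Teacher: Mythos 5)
Your proof is correct and follows essentially the same route as the paper: the identical three-case analysis on the algorithm's branches, with Case 2 reducing to the swap guard $\delta_{e_i} \geq (1+\beta)\delta_{a^*}$ and Case 1 using $\delta_{e_i} \geq 0$. Your observations that the rearrangement in Case 2 is sign-agnostic and that submodularity is never actually used are accurate refinements of the paper's argument, not deviations from it.
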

\begin{proof}
 There are three cases. Case 1: if $A'_{i-1} + e_i \in M$ and $\delta_{e_i} \geq 0$. Then, we have $A_i = A_{i-1} + e_i$ and $A'_i = A'_{i-1} + e_i.$ We get that
\begin{align*}
    \sum_{e_j \in A'_i } \delta_{e_j} - \sum_{e_j \in A'_{i-1} } \delta_{e_j}    = \delta_{e_i}  \geq_{(1)} \frac{\beta \delta_{e_i}}{1 + \beta}   
     = \frac{\beta}{1 + \beta}\left(\sum_{e_j \in A_i } \delta_{e_j} - \sum_{e_j \in A_{i-1} } \delta_{e_j} \right), 
\end{align*}
where $(1)$ is since $\delta_{e_i} \geq 0$.
Case 2: 
If  $A'_{i-1} + e_i \in M$ and $\delta_{e_i} \geq 0$ do not hold and $\delta_{e_i} \geq (1 + \beta) \delta_{a^{*}}$ does hold. This is the main case for this proof. In this case, we have $A_i = A_{i-1} + e_i$ and $A'_i = A'_{i-1} - a^{*}  + e_i.$ We get that
\begin{align*}
    \sum_{e_j \in A'_i } \delta_{e_j} - \sum_{e_j \in A'_{i-1} } \delta_{e_j}   
    =_{(1)} \delta_{e_i} -  \delta_{a^*}
     \geq_{(2)}  \frac{\beta}{1 + \beta} \delta_{e_i}    
     =_{(3)}   \frac{\beta}{1 + \beta}\left(\sum_{e_j \in A_i } \delta_{e_j} - \sum_{e_j \in A_{i-1} } \delta_{e_j} \right)  .
\end{align*}
where $(1)$ is since
$A'_i = A'_{i-1} - a^{*}  + e_i$, $(2)$ since
 $\delta_{e_i} \geq (1 + \beta ) \delta_{a^*}$, and $(3)$
since $A_i = A_{i-1} + e_i$. Case 3: if neither conditions hold for the two previous cases, we have that $A_i = A_{i-1}$ and $A'_i = A'_{i-1}$. This implies that $\sum_{e_j \in A'_i } \delta_{e_j} - \sum_{e_j \in A'_{i-1} } \delta_{e_j} = 0$ and $\sum_{e_j \in A_i } \delta_{e_j} - \sum_{e_j \in A_{i-1} } \delta_{e_j} = 0$, and we trivially obtain the desired claim.
\end{proof}

We are now ready to prove Lemma~\ref{lem:primehalf}.
\begin{proof}[Proof of Lemma~\ref{lem:primehalf}] 
 We first claim that $A'_n \cap \{e_1, \ldots, e_{j-1}\} \subseteq A_{j-1}$. Consider $e_i \in A'_n \cap \{e_1, \ldots, e_{j-1}\}$, so $i \leq j-1$.  Since $e_i \in A'_n,$ we have $e_i \in A'_i$ by definition of the algorithm. Since $i \leq j-1$, we have $A'_i \subseteq A_i \subseteq A_{j-1}$. Since $e_i \in A'_i$ and  $A'_i \subseteq A_{j-1}$, we get $e_i \in A_{j-1}$, which proves the desired claim. We get that
    \begin{align*}
        f(A'_n) & = \sum_{e_j \in A'_n} f(e_j | A'_n \cap \{e_1, \ldots, e_{j-1}\}) & \\ 
        & \geq \sum_{e_j \in A'_n} \delta_{e_j} &  A'_n \cap \{e_1, \ldots, e_{j-1}\}  \subseteq  A_{j-1} \\
        & = \sum_{i=1}^n \left( \sum_{e_j \in A'_i } \delta_{e_j} - \sum_{e_j \in A'_{i-1} } \delta_{e_j}\right) & \text{telescoping sum}\\
         & \geq  \sum_{i=1}^n\frac{\beta}{1 + \beta}\left(\sum_{e_j \in A_i } \delta_{e_j} - \sum_{e_j \in A_{i-1} } \delta_{e_j} \right) & \text{Lemma~\ref{lem:half}} \\
          & =\frac{\beta}{1 + \beta} \sum_{e_j \in A_n} \delta_{e_j} & \text{telescoping sum} \\
         & = \frac{\beta f(A_n)}{1 + \beta}. & \qedhere
    \end{align*}
\end{proof}

\subsection{Proof of Lemma~\ref{lem:primeopt}} \label{lem:po-proof}

The main lemma used to prove Lemma~\ref{lem:primeopt} is the following.

\begin{restatable}{lemma}{graphconstruction}
\label{lemma:graph-construction}
Let $f$ be a submodular function. For any independent set $S \subseteq \uni$,
there exists an injection $\phi: S \to A'$ such that 
$\delta_{\phi(a)} \ge \delta_{a}/(1 + \beta )$ for all $a \in S$.
\end{restatable}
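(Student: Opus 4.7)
The plan is to construct $\phi$ by tracing each $s \in S$ through the algorithm's history to produce a set of candidate gatekeepers in $A'$, and then extracting the injection via a bipartite matching argument (Hall's theorem).

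For $s \in S \cap A'$ I set $\phi(s) = s$; the weight condition is immediate. For $s \in S \setminus A'$ I define the candidate set $N(s) \subseteq A'$ by considering the two possible fates of $s$. If $s \in A \setminus A'$, then $s$ was inserted into $A'$ and later displaced, initiating a swap chain $s = c_0 \to c_1 \to \cdots \to c_k \in A'$; the swap rule forces $\delta_{c_{j+1}} \ge (1+\beta)\delta_{c_j}$, so the terminus $c_k$ is an eligible gatekeeper for $s$. If $s \notin A$, then $s$ was rejected upon arrival at some iteration $i_s$; barring the trivial case $\delta_s < 0$ (where any nonnegative-weight element of $A'$ suffices), the blocker $a^*(s) \in A'_{i_s-1}$ satisfies $\delta_{a^*(s)} > \delta_s/(1+\beta)$ by the rejection condition, and by minimality of $a^*(s)$ \emph{every} $b \in A'_{i_s-1}$ with $A'_{i_s-1} - b + s \in \ind$ shares this weight bound. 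Following the forward swap chain from each such $b$ reaches a terminus in $A'$ with weight only larger; I include all these termini in $N(s)$.

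I then build the bipartite graph $G$ with edges $\{(s, g) : g \in N(s)\}$ and apply Hall's theorem to extract the injection. The needed condition $|N(T)| \ge |T|$ for every $T \subseteq S$ is where the matroid structure enters: for fixed $T$, the idea is to choose an iteration $i^*$ past which no element of $T$ is processed and apply the matroid exchange property to the independent pair $(T, A'_{i^*})$, producing $|T \setminus A'|$ distinct elements of $A'_{i^*}$ that can be exchanged with elements of $T$; each such exchange partner lies in, or leads via a forward swap chain to, an element of $N(s)$ for the corresponding $s$, with the weight bound preserved along the chain.

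The main obstacle is precisely the injectivity claim hidden inside Hall's condition: distinct blockers or alternative exchange partners for different elements of $T$ may share a common swap successor and thus collapse to the same terminus in $A'$. Ruling out such collisions, or compensating for them, is the crux of the proof. I anticipate handling this via a careful graph construction on $\uni$ that records both swap edges (from each displaced element to its displacer) and rejection edges (from each rejected element to its blocker), together with an inductive or flow-type argument showing that each additional matroid exchange can be routed to a fresh terminus. This is where the bulk of the technical work will lie, consistent with the authors' remark that ``much of the proof is devoted'' to establishing injectivity via a graph construction.
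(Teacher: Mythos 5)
Your overall route (assign each $s \in S$ a set of admissible gatekeepers in $A'$ and extract the injection by a Hall-type argument) is in the same spirit as the paper, which delegates precisely this combinatorial step to Lemma~\ref{lemm:machinery}. But your candidate sets are too small, and this is a genuine gap rather than a routing detail to be patched. After the first (rejection-time) step you follow only the displacer chain: from a displaced element you move to the single element that swapped it out. The paper's graph instead records, at every displacement, edges from the displaced element $u_i = a^*_i$ to \emph{all} other members of the circuit of $A'_{i-1}+e_i$; these co-circuit elements have $\delta$ no smaller than $\delta_{a^*_i}$ by the minimality of $a^*_i$ (no further $(1+\beta)$ factor is needed beyond the first step), and they are exactly what makes every non-sink spanned by its out-neighborhood, the hypothesis under which Lemma~\ref{lemm:machinery} delivers injectivity. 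With displacer-only chains the Hall condition you intend to verify is simply false, so the late-iteration exchange argument with $A'_{i^*}$ (which in any case is not connected back to your sets $N(s)$, defined at the earlier arrival times) cannot be completed.

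Concretely, take the rank-$3$ linear matroid on $\mathbb{R}^3$ with elements $p_1=(1,0,0)$, $q=(0,1,0)$, $r=(0,0,1)$, $p_2=(1,0,1)$, $p_3=(2,0,1)$, $s_1=(1,1,0)$, $s_2=(1,1,1)$, $s_3=(2,1,1)$, let $f$ be modular (hence monotone submodular) with weights $1,\ 1.5,\ 100,\ 2,\ 4,\ 1.9,\ 2.9,\ 2.9$ respectively, take $\beta=1$, and arrival order $p_1,q,r,s_1,p_2,s_2,p_3,s_3$. Then $p_1,q,r$ are added; $s_1$ is rejected with swap candidates $\{p_1,q\}$ (since $1.9 < 2\cdot 1$); $p_2$ displaces $p_1$ via the circuit $\{p_1,r,p_2\}$ (since $2\ge 2\cdot 1$ and $\delta_{p_1}<\delta_r$); $s_2$ is rejected with candidates $\{p_2,q\}$ (since $2.9<2\cdot 1.5$); $p_3$ displaces $p_2$ via the circuit $\{p_2,r,p_3\}$; and $s_3$ is rejected with candidates $\{p_3,q\}$. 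The run ends with $A'=\{q,r,p_3\}$, and the only nontrivial displacer chain is $p_1\to p_2\to p_3$, so your candidate sets are $N(s_1)=N(s_2)=N(s_3)=\{q,p_3\}$, while $T=\{s_1,s_2,s_3\}$ is independent (the three vectors have determinant $1$). Hall's condition fails and no injection compatible with your construction exists. The lemma is nevertheless true on this instance, e.g.\ $\phi(s_1)=r$, $\phi(s_2)=q$, $\phi(s_3)=p_3$: the gatekeeper $r$ is reachable in the paper's graph because the displacements of $p_1$ and $p_2$ also add edges toward $r$, which is exactly the information your chains discard.
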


 In order to prove this lemma, we use graph constructions inspired by \citet{Chekuri2015} and \citet{Feldman2018a}. At a high level, the proof will be divided into three parts as follows:
\begin{itemize}
    \item[1)] Construct a graph based on a run of the algorithm (\Cref{sec:graphc}).
    \item[2)] Verify particular properties of this graph (\Cref{sec:properties}).
    \item[3)] Invoke a technical lemma from \citet{Feldman2018a} which that allows  to use the graph's properties  to deduce \autoref{lemma:graph-construction} (\Cref{sec:usingproperties}). 
\end{itemize}
Finally, we use Lemma~\ref{lemma:graph-construction} to prove Lemma~\ref{lem:primeopt} in \Cref{sec:usinglemma}.

\subsubsection{Graph construction}
\label{sec:graphc}
  We initialize a directed graph $G_0$ with no edges and vertex set $V$ consisting of a vertex $v_i$
for each element $e_i$ of $\uni$.  Then, we obtain $G = G_n$ by iteratively constructing $G_i$ from $G_{i-1}$ via the addition of directed edges. The edges to be added are determined by the behavior of \process{} on iteration $i$, as follows. 
  If $A'_{i-1} + e_i$ is found to be independent and $f( e_i | A_{i - 1}) \ge 0$,
  so that $e_i$ is added to the current state of $A'$
without necessitating a swap, then no edges are added and $G_i = G_{i-1}$. Otherwise, we consider the set $U_i = \{e_i\} \cup \{a :  A'_{i-1} \setminus a + e_i \in I \}$ which contains $e_i$ and the potential elements $e_i$ could have been swapped with in this round.
Then, define $a^*_i$ to an $a \in U_i \setminus e_i$ with minimum $\delta_a$ value.
Observe that exactly one of $a^*_i$ or $e_i$ is in $A'_{i}$ (either $e_i$
had enough marginal contribution to cause $a^*_i$ to be swapped out, or it did not).
Denote $u_i$ to be the  element of $\{a^*_i, e_i\}$ that is \textit{not} in $A_i'$.
Then, $G_i = G_{i - 1} \cup \{ (u_i, a) : a \in U_i \setminus u_i \}$. That is,
we add a directed edge from $u_i$ to all the other elements of $U_i$. After $n$ iterations of this procedure, we obtain our final graph $G = G_n$. 

\subsubsection{Properties of the graph}
\label{sec:properties}
Next, we give properties that are satisfied by the graph $G$. We first note the following folklore theorem. 
\begin{theorem}[Folklore, as in \citet{unique_circuit}] \label{thm:folklore} Given some matroid $\mat = (\uni, \ind)$, if $S \in \ind$ but $S + e \notin \ind$ then there exists a unique circuit $C \subseteq S+e$.
\end{theorem}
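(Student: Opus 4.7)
The plan is to prove the statement in two parts: existence of a circuit and uniqueness. Existence is nearly immediate, and uniqueness will follow from a contradiction argument driven by submodularity of the matroid rank function, which itself is a standard consequence of the three matroid axioms stated in the preliminaries.

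For existence, since $S+e \notin \ind$, by downward-closure the set $S+e$ contains at least one minimal dependent subset $C$, i.e., a circuit. I claim $e \in C$: otherwise $C \subseteq S$, but $S \in \ind$ is downward-closed, so every subset of $S$ is independent, contradicting that $C$ is dependent. Thus at least one circuit $C \subseteq S+e$ exists, and it contains $e$.

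For uniqueness, suppose for contradiction that $C_1 \ne C_2$ are two distinct circuits in $S+e$. By the existence argument above, both must contain $e$. Define the rank function $r(A)$ as the common size of all maximal independent subsets of $A$ (well-defined via the augmentation axiom). Since each $C_i$ is minimally dependent, $C_i - e \in \ind$, so $r(C_i) = |C_i|-1$. Because $C_1 \cap C_2$ is a proper subset of the circuit $C_1$, it is independent, so $r(C_1 \cap C_2) = |C_1 \cap C_2|$. Applying submodularity of $r$, I get
\[
r(C_1 \cup C_2) \;\le\; r(C_1) + r(C_2) - r(C_1 \cap C_2) \;=\; |C_1| + |C_2| - 2 - |C_1 \cap C_2| \;=\; |C_1 \cup C_2| - 2.
\]
Therefore $r\bigl((C_1 \cup C_2) - e\bigr) \le r(C_1 \cup C_2) \le |C_1 \cup C_2| - 2 < |(C_1 \cup C_2) - e|$, so $(C_1 \cup C_2) - e$ is dependent. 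But since $e \in C_1 \cap C_2$, this set is contained in $S$, which is independent; by downward-closure every subset of $S$ is independent, a contradiction. Hence $C_1 = C_2$.

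The main obstacle is that the preliminaries only give the three axioms (nonemptiness, downward-closure, augmentation) and do not directly provide the rank function or its submodularity. I would address this with a short intermediate lemma deriving both: first use the augmentation axiom to show that any two maximal independent subsets of a fixed set have equal cardinality (so $r$ is well-defined), then derive submodularity of $r$ by the standard exchange argument, picking a common maximal independent subset $I$ of $A \cap B$, extending it to a maximal independent subset $J$ of $A \cup B$, and checking that $J \cap A$ and $J \cap B$ witness $r(A) + r(B) \ge r(A\cup B) + r(A \cap B)$. Once these are in hand, the rank-based contradiction above closes out uniqueness cleanly.
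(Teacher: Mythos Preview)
Your proof is correct. One small point worth making explicit: when you assert that $C_1 \cap C_2$ is a \emph{proper} subset of $C_1$, you are implicitly using that no circuit can properly contain another (otherwise the smaller one would be an independent dependent set); this follows immediately from minimality, but deserves a sentence.

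As for comparison with the paper: the paper does not actually prove this statement. It is stated as a folklore result with a citation and then used as a black box in the proof of the subsequent proposition about swap sets being circuits. So there is no ``paper's own proof'' to compare against; your rank-submodularity argument is a perfectly standard and complete way to supply the missing justification. An equally common alternative, which avoids developing the rank function, is the direct circuit-elimination argument: from two distinct circuits $C_1, C_2$ sharing $e$, one shows $(C_1 \cup C_2) - e$ is dependent by an exchange/augmentation count, yielding a circuit inside $S$. Your route trades that combinatorial exchange for the (equivalent) algebraic inequality $r(C_1 \cup C_2) \le |C_1 \cup C_2| - 2$, which is arguably cleaner once rank submodularity is available, at the cost of the intermediate lemma you outline.
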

From \autoref{thm:folklore}, we can show the following known proposition about circuits, whose proof we include for completeness.
\begin{restatable}{proposition}{swapcircuit}
  \label{prop:swap-circuit}
Given a matroid $\mat = (\uni, \ind)$, let $S \in \ind$ and $e \in \uni$, such that $S + e \not \in \ind$. Then $\{ a : S \setminus a + e \in \ind \} \cup \{ e \}$ is the unique circuit contained in $S + e$.
\end{restatable}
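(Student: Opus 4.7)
The plan is to apply Theorem \ref{thm:folklore} to extract the unique circuit $C \subseteq S+e$, and then to show that $C$ coincides with $T := \{a : S \setminus a + e \in \ind\} \cup \{e\}$ by verifying both inclusions separately.

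First I would observe that $e \in C$: if not, then $C \subseteq S$, contradicting the fact that $S \in \ind$ together with the downward-closed property. It therefore suffices to prove the set equality $C \setminus \{e\} = \{a \in S : S \setminus a + e \in \ind\}$. For the $\supseteq$ direction, suppose $a \in S$ satisfies $S \setminus a + e \in \ind$; if $a$ were not in $C$, then $C \subseteq (S+e) \setminus \{a\} = S \setminus a + e$, which would force the independent set $S \setminus a + e$ to contain a circuit, a contradiction. For the $\subseteq$ direction, let $a \in C \setminus \{e\}$, and suppose for contradiction that $S \setminus a + e \not\in \ind$. Then $S \setminus a + e$ contains some circuit $C'$, and since $S \setminus a \subseteq S$ is independent, we must have $e \in C'$, and in particular $C' \subseteq S + e$. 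Invoking the uniqueness asserted by Theorem \ref{thm:folklore} yields $C' = C$, but $a \in C = C'$ while $a \notin S \setminus a + e \supseteq C'$, a contradiction.

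The argument is a clean two-sided use of the uniqueness of the circuit in $S+e$, together with the downward-closed property of matroids, and I do not anticipate any significant obstacle. The only care required is to remember to invoke uniqueness in the $\subseteq$ direction; the $\supseteq$ direction instead uses only the existence of a circuit inside any dependent set.
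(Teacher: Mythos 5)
Your proof is correct, and the skeleton matches the paper's: both first note $e \in C$ (same argument), then prove the two inclusions between $C \setminus e$ and $\{a : S \setminus a + e \in \ind\}$, with the inclusion $\{a : S \setminus a + e \in \ind\} \subseteq C$ argued identically (if $a \notin C$ then $C \subseteq S \setminus a + e$, contradicting independence). The difference is in the reverse inclusion $C \setminus e \subseteq \{a : S \setminus a + e \in \ind\}$: the paper argues constructively, taking $C' = C \setminus a$ (independent by minimality of the circuit) and augmenting it from $S$ via the exchange axiom, checking that $a$ is never added, until $C' = S \setminus a + e$, which exhibits independence directly without any appeal to uniqueness. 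You instead argue by contradiction: if $S \setminus a + e \notin \ind$, it contains a circuit $C'$ (indeed, Theorem \ref{thm:folklore} applied to the independent set $S \setminus a$ gives this directly), and since $C' \subseteq S + e$, uniqueness forces $C' = C$, contradicting $a \in C$ but $a \notin S \setminus a + e$. Both are valid; your route is shorter and leans a second time on the uniqueness statement of Theorem \ref{thm:folklore}, while the paper's route is self-contained for that direction, using only circuit minimality and the augmentation property. No gaps.
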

\begin{proof}
  Let $B = \{ a : S \setminus a + e \in \ind \} \cup \{ e \}$.
  First, $e \in C$ since otherwise $C \subseteq S$, and hence would be independent.
  So it suffices to show that $C \setminus e = B \setminus e$.

  Let $a \in B \setminus e$. Notice that since $S \setminus a + e \in \ind$,
  every subset of $S \setminus a + e$ is independent, and
  hence $C \not \subseteq S \setminus a + e$.  As $C$ is a subset of $S + e$, this implies that $a \in C$. Therefore,
  $B \setminus e \subseteq C \setminus e$.

  Next, let $a \in C \setminus e$. Then $C' = C \setminus a$ must be independent
by the inclusionwise minimality of $C$. While $|S| > |C'|$, we can iteratively add elements from $S$ to $C'$ while
preserving independence by the augmentation property of matroids. Observe that we never add $a$ to $C'$, as then $C \subseteq C'$
and $C'$ would not be independent. Hence, we finally obtain $C' = S \setminus a + e$.  Thus, $S \setminus a + e \in \ind$,
and $a \in B \setminus e$. Therefore, $C \setminus e \subseteq B \setminus e$. 
\end{proof}

We are now ready prove the graph properties.

\begin{property}  All non-sinks $v$ of $G$ are spanned by the set $\delta^+(v) = \{ x : (v, x) \in G \}$.
 \label{prop:non-sink}
\end{property}
\begin{proof}
  From the graph construction, out-edges are added to $v$ if and only if $v = u_i$ at some
  iteration $i$. This means that $e_i + A'_{i-1}$ is not independent, although $A'_{i-1}$ is independent.
  Therefore, by \autoref{prop:swap-circuit}, $U_i = \{e_i\} \cup \{x :  A'_{i-1} \setminus x + e_i \in \ind \}$
  is a circuit, and by construction each $u \in U_i \setminus u_i$ is an out-neighbor of $v$. Moreover,
  rank$(U_i \setminus u_i)$ = rank($U_i$), so $U_i \setminus u_i$ spans $u_i$. Since $\delta^+(v) \supseteq U_i \setminus v$,
  it holds that $\delta^+(v)$ also spans $v$.
\end{proof} 
Next, by simply inspecting the behavior of the algorithm in how it chooses $u_i$ in each iteration $i$, we get the following two properties.
\begin{property} \label{prop:obv}
  No element $e_k$ can be designated as both $u_i, u_j$ for some $i \neq j$ during the construction of $G$.
\end{property}
\begin{proof}
  Once an element $e_k$ is designated $u_i$ at some iteration $i$, by construction $e_k \not \in A'_i$, and $k \le i$.
  Moreover, the only possible candidates to be added to $A'_i$ are $\{e_{i+1}, \ldots, e_n\}$. So $e_k \not \in A'_j$,
  for any $j \ge i$ and hence cannot be chosen as $u_j$ for some iteration $u > i$. 
\end{proof} 
\begin{property} \label{prop:sink}
  An element $e_i \in A'$ implies its corresponding vertex in $G$ is a sink.
  Conversely, a vertex $v$ in $G$ is a sink implies that it is in $A'$.
\end{property}
\begin{proof}
  An element is chosen as $u_i$ in some iteration $i$ iff out-edges are added to its corresponding vertex in iteration $i$ iff $u_i$ is not a sink.
  By the proof of \autoref{prop:obv}, 
  an element $s$ chosen as $u_i$ in some iteration implies that $s \not \in A'$. Further, if an element $s$ was never chosen
  as $u_i$ in any iteration $i$, necessarily $s \in A'$.
\end{proof} 
This leads very naturally to the following desirable property of $G$.

\begin{property} $G$ does not contain any cycles. \label{G:acyclic} \end{property}
\begin{proof}
  To see this is the case, simply note that out-edges are only ever added to vertices which are at some point designated as $u_i$ for some iteration $i$ in the construction of $G$.  Hence, if $G$ had a cycle, necessarily this fact and Property~\ref{prop:obv} above would imply that it would have to contain some edge $(u_j, u_i)$ for $j > i$ where $u_i \neq u_j$.  However, by inspection this clearly yields a contradiction, as $u_i$ must have been such that it was not in $A'_i$, and hence clearly not in $A'_j$, so $u_j$ could never have an outedge from itself to $u_i$ (as all neighbors of $u_j$ must be in $A'_j$).  Thus, $G$ cannot contain a cycle. \end{proof} 
Finally, we state and prove our last property:
\begin{property} Let element $a$ be reachable in $G$ from element $e$. Then $\delta_e \le (1 + \beta)\delta_a$. \label{prop:delta}
\end{property}
\begin{proof}
  Let $(y,x)$ be any edge in $G$. Observe that all edges added during the graph construction
  satisfy that the target vertex is in $A$. If also $y \in A$, it means that $y = u_j = a_j^*$
  on some (unique) iteration $j$; and hence by the selection of $a_j^*$,
  $\delta_y \le \delta_x$. If it is the case that $y \not \in A$,
  it means $y = u_j = e_j$ on some iteration $j$; since $y$ is rejected, it must hold that
  $\delta_y < (1 + \beta ) \delta_x$.

  Consider the edges on a path from $e$ to $a$: $(e, x_1),(x_1, x_2),\ldots,(x_{k-1}, x_k = a)$. By the
  above observation $\{ x_1,\ldots,x_k \} \subseteq A$,.
  so $\delta_{x_i} \le \delta_{x_{i+1}}$ for all $i \in 1, \ldots, k-1$; also, $\delta_e \le (1 + \beta )\delta_{x_1}$.
  Therefore, $\delta_e \le (1 + \beta )\delta_{a}$.
\end{proof}

\subsubsection{Using the graph properties}
\label{sec:usingproperties}

  We introduce a technical lemma from \citet{Feldman2018a}.
\begin{lemma}[Lemma 13 in \citep{Feldman2018a}] \label{lemm:machinery} Consider an arbitrary directed acyclic graph $G$ whose vertices are elements of some matroid $\mat$.  If every non-sink vertex $v$ of $G$ is spanned by $\delta^+(v)$ in $\mat$, then for every set $S$ of vertices of $G$ which is independent in $\mat$ there must exist an injective function $\phi_S$ such that, for every vertex $v \in S$, $\phi_S(u)$ is a sink of $G$ which is reachable from $u$.
\end{lemma}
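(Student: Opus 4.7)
The plan is to reduce the claim to Hall's marriage theorem applied to a natural bipartite graph. Form the bipartite graph $H$ with $S$ on the left and the sinks of $G$ on the right, placing an edge between $v \in S$ and sink $t$ whenever $t$ is reachable from $v$ in $G$ (using reflexive reachability, so that a vertex $v \in S$ that is already a sink is adjacent to itself). An injection $\phi_S$ of the required form is exactly a matching of $H$ that saturates $S$, so by Hall's theorem it suffices to show that for every $T \subseteq S$, the set $N_H(T)$ of sinks reachable from some vertex of $T$ satisfies $|N_H(T)| \ge |T|$.

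The crux is the following rank identity: letting $R(T) \subseteq V(G)$ denote the set of all vertices of $G$ reachable from $T$, I claim that $\mathrm{rank}_\mat(R(T)) = \mathrm{rank}_\mat(N_H(T))$. To establish it, fix a topological order on $R(T)$, which exists since $G$ is acyclic by hypothesis, and process the non-sinks of $R(T)$ in that order, removing each in turn. When a non-sink $v$ is to be removed, the hypothesis gives that $\delta^+(v)$ spans $v$ in $\mat$; moreover $\delta^+(v) \subseteq R(T)$ by closure of reachability under taking out-neighbors, and every element of $\delta^+(v)$ appears later than $v$ in the topological order, hence has not yet been removed. Therefore $v$ is still spanned by the current set, so removing it preserves the rank. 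When all non-sinks have been processed, the set that remains is exactly $N_H(T)$, yielding the identity.

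Combining the rank identity with the independence of $T \subseteq S$ in $\mat$ and the inclusion $T \subseteq R(T)$ gives
\[
|N_H(T)| \;\ge\; \mathrm{rank}_\mat(N_H(T)) \;=\; \mathrm{rank}_\mat(R(T)) \;\ge\; \mathrm{rank}_\mat(T) \;=\; |T|,
\]
which is Hall's condition; applying Hall's theorem produces the desired injection $\phi_S$. The main subtlety I anticipate is the rank argument: one must verify that removing $v$ leaves a set that still contains $\delta^+(v)$, so that $v$ remains spanned by what is left. Processing in topological order is what makes this clean, since every out-neighbor of a non-sink $v$ is either itself a sink (never removed) or comes strictly later in the order (not yet removed). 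The remaining pieces — constructing $H$, using $|N_H(T)| \ge \mathrm{rank}_\mat(N_H(T))$, and invoking Hall — are routine.
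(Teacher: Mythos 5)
Your proof is correct, but note that the paper itself offers no proof of this statement: it is imported verbatim as Lemma~13 of Feldman et al.\ and used as a black box, so your argument is best compared with the proof in that cited source, which (in the original) constructs the injection directly by an exchange-style argument --- repeatedly using the hypothesis that a non-sink $v$ is spanned by $\delta^+(v)$ to swap non-sink elements of the independent set for out-neighbors while preserving independence, with injectivity coming from the set keeping its size. Your route is genuinely different: you reduce to Hall's theorem on the bipartite ``element--reachable sink'' graph and verify Hall's condition via a rank identity, $\mathrm{rank}(R(T))=\mathrm{rank}(N_H(T))$, obtained by deleting the non-sinks of $R(T)$ in topological order; each deletion preserves rank because the deleted vertex $v$ satisfies $v\in\mathrm{span}(\delta^+(v))$ with $\delta^+(v)$ contained in $R(T)$, disjoint from $\{v\}$ (no self-loops in a DAG), and not yet deleted (out-neighbors are sinks or come later in the order). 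Combined with $T\subseteq R(T)$, independence of $T$, and $|N_H(T)|\ge\mathrm{rank}(N_H(T))$, this gives $|N_H(T)|\ge|T|$, and Hall's theorem yields the saturating matching, i.e.\ the injection $\phi_S$; your use of reflexive reachability correctly handles elements of $S$ that are themselves sinks, which the application in the paper requires. The trade-off is that the exchange proof is constructive and needs no matching theory, while your argument is shorter and isolates the matroid content in a single clean rank-preservation step at the cost of invoking Hall's theorem non-constructively; both establish exactly the statement the paper needs.
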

We are ready to formally prove Lemma \ref{lemma:graph-construction}, which we restate for convenience.
\graphconstruction*
\begin{proof}
  Consider the graph $G$ which we constructed above. By Lemma \ref{lemm:machinery}, Property~\ref{prop:non-sink},
  and  Property~\ref{G:acyclic} guarantee the existence, for any
  independent set $S \subseteq Y$, of an injective function $\phi$ mapping elements of $S$ to a set
  $T$ of sinks of $G$, such that $\phi(s)$ is reachable from $s$. By Property~\ref{prop:sink},
  $T \subseteq A'$. Finally, by Property~\ref{prop:delta}, it follows that $\delta_{s} \le (1 + \beta )\delta_{\phi(s)}$.
\end{proof}

\subsubsection{Using Lemma~\ref{lemma:graph-construction} to prove Lemma~\ref{lem:primeopt}}
\label{sec:usinglemma}

Finally, we use Lemma~\ref{lemma:graph-construction} to prove Lemma~\ref{lem:primeopt}.

\primeopt*
\begin{proof}
By Lemma~\ref{lemma:graph-construction}, there exists an injection $\phi : O \rightarrow A'$ such that $\delta_{\phi(e_j)} \ge \delta_{e_j} / (1 + \beta )$ for all $e_j \in O$. We get that 
\begin{align*}
    f(O \cup A)  - f(A) & \leq \sum_{e_j \in O \setminus A} \delta_{e_j} & \text{submodularity} \\
    &\le  \sum_{e_j \in O \setminus A} (1 + \beta) \delta_{\phi(e_j)}  & \\
      & =    \sum_{\substack{e_i \in A':  \ e_i = \phi(e_j)  \\ \text{for some } e_j \in O \setminus A}} (1 + \beta ) \delta_{e_i} & \text{$\phi(e_i) \in A'$ for $e_j \in O$}\\
            &  \leq    \sum_{e_i \in A'_n} (1 + \beta )  \delta_{e_i} & \text{$\delta_{e_i} \geq 0$ for  $e_i \in A_i$} \\
     &   \leq   \sum_{e_i \in A'_n} (1 + \beta ) f(e_i | A'_{i-1}) & \text{submodularity}\\
      &   =  (1 + \beta )f(A'_n). & \qedhere
\end{align*}
\end{proof}

\subsection{Tight examples} \label{sec:tight}

In this section, we describe a set of instances  for which \quickmax{} gets a ratio arbitrarily close to $1/4$, which shows that the analysis of the preceding sections is tight. Let $\epsi > 0$. We construct an instance where the set $A'$ returned by \quickmax{} is
less $\opt / (4 - \epsi)$. Let $m$ be an integer greater than $\log_2( 1 / \epsi ) + 1$,
and let $\uni = \{ x_0, x_1, \ldots, x_m, x_{m+1} = o \}$ be an ordered set of $m + 2$ elements.
For each $i < m + 1$, let $g(x_i) = 2^i$. Let $g(o) = 2^{m+2} - 2$. For any $S \subseteq \uni$,
define $g(S) = \sum_{s \in S} g(s)$; thus, $g$ is a modular fuction. Finally, define
$f(S) = \min \{ g(S), g(o) \}$; then $f$ is a monotone, submodular function. Then, consider
a size constraint of $k = 1$. The optimal solution on this instance is clearly $\{ o \}$.

Consider the run of \quickmax{} on this instance, where the elements of $\uni$ are processed
in the given ordering (for convenience, number the iterations of the \textbf{for} loop
from $0$). Suppose inductively at iteration $i - 1$, $A_{i-1} = \{x_0, \ldots, x_{i-1}\}$,
and $A'_{i-1} = \{x_{i-1} \}$; this is satisfied at iteration $1$ since $\{ x_0 \}$ is feasible at
iteration $0$. 
Then, at iteration $i < m + 1$,
\begin{align*}
  g(A_{i - 1} + x_i) = \sum_{j=0}^i g(x_j) = \sum_{j=0}^i 2^j = 2^{i+1} - 1 < g(o).
\end{align*}
Therefore, $f(A_{i - 1} + x_i) = g(A_{i-1} + x_i)$ and thus $\delta_{x_i} = g( x_i ) = 2^i$.
Since inductively, $A'_{i - 1} = \{ x_{i -1 } \}$, and $\delta_{x_i} \ge 2 \delta_{x_{i-1}}$,
a swap is made: $A'_i = A'_{i - 1} \setminus x_{ i - 1} + x_i$ and $A_i = A_{i - 1} + x_i$,
which was to be shown.

Now, consider iteration $m + 1$, by the above argument $A_m = \{ x_1, \ldots, x_m \}$ and $A'_m = \{ x_m \}$.
By definition $f(A_m + o) = g(o) = 2^{m + 2} - 2$. Hence
$$\Delta ( o | A_m ) = 2^{m + 2} - 2 - (2^{m + 1} - 1) = 2^{m +1} - 1 < 2\delta_{x_m} = 2^{m + 1}.$$
Thus $o$ is rejected, and the algorithm terminates with $A' = \{ x_m \}$. Moreover,
$$ \frac{f(o)}{ f(x_m) } = \frac{2^{m +2} - 2}{2^m} = 4 - 2^{1 - m} > 4 - \epsi,$$
  since $m \ge \log_2(1 / \epsi ) +1$. 

\section{Approximation algorithm for \nmon{} with linear query complexity} \label{sec:nm}

In this section, we present the first constant-factor algorithm
with linear query complexity for general submodular
objectives under a matroid constraint. Specifically,
Alg. \ref{algo:non-monotone} achieves ratio
$\approx 1/11.67$ with exactly $2n$ queries to $f$.  First, we discuss why our algorithm \quickmax{} does not achieve a ratio for general, submodular objectives;
Lemmata \ref{lem:primehalf},\ref{lem:primeopt} establish a relationship between $f( O \cup A )$ and $f(A')$,
where $O$ is an optimal solution. 
Since $f$ is non-monotone, it may hold that $f( O \cup A )$ is smaller
than $f(O)$ and may have no non-trivial lower bound.

\paragraph{Description of the algorithm.}
To deal with this challenge, at a high level, we run two copies of \quickmax{}
concurrently, making sure all of the sets maintained are disjoint
between the two versions. The first copy maintains sets $A' \subseteq A$ as before,
and the second copy maintains sets $B' \subseteq B$. To ensure the sets $A \cap B$
are disjoint, an element $e$ is processed only by the copy that would assign
a larger weight $\delta_e$ to the element; that is, the copy
that determines a larger marginal gain to its infeasible set ($A$ or $B$).
Making this determination requires two queries to $f$, which are the only
queries required for processing the element.

\begin{algorithm}[ht]
\DontPrintSemicolon
\textbf{Input:} function $f$, matroid $M$, element $e$, sets  $S$, set $S'$, contributions $\{\delta_a\}_{a \in S'}, \delta_e$, parameter $\beta$ \;
    \If {$S' + e \in M$ and $\delta_e \geq 0$}{
    $S \gets S + e$\;
    $S' \gets S' + e$
  }
  \Else{
  $a^* \gets \argmin_{a \in \{ S': S' - a + e \in M \} } \delta_a$ \;
  \If {$\delta_e \geq (1+\beta) \; \delta_{a^*}$}{
    $S \gets S + e$\;
    $S' \gets S' - a^* + e$
  }
}
\Return{ $S, S'$ }
 \caption{The \process{} subroutine}
 \label{algo:process}
\end{algorithm}

\begin{algorithm}[ht]
\DontPrintSemicolon
\textbf{Input:} function $f$, matroid $M$, ground set $N = \{e_1,\ldots,e_n\}$, parameter $\beta$  \;
 $A, A', B, B' \gets \emptyset$ \;
 \For{$i = 1$ to $n$}{
  \If{$f(e | A) > f(e | B)$}{
    $\delta_e \gets f(e|A)$ \;
    $A,A' \gets$ \process{}$(f, M, e, A, A', \{\delta_a\}_{a \in A'}, \delta_e, \beta)$\;
    $B,B' \gets B,B'$
  }
  \Else{
  $\delta_e \gets f(e|B)$ \;
    $B, B' \gets $ \process{}$(f, M, e, B, B',\{\delta_a\}_{a \in B'}, \delta_e, \beta)$\;
    $A,A' \gets A,A'$
  }
 }
 Return $\argmax\{f(A'), f(B')\}$
 \caption{\algnm{}}
 \label{algo:non-monotone}
\end{algorithm}


\paragraph{The analysis.} Let $N_A = \{e : f(e | A_{i-1}) > f(e | B_{i-1})\}$ and $N_B = N \setminus N_A$.  Consider  $e_i \in N$, then Algorithm~\ref{algo:non-monotone} either calls  \textsc{Process} over $A$ or over $B$ for $e_i$. In the first case, we say that $e_i$ is processed by $A$ and in the other we say that it is processed by $B$. Let $O_A$ and $O_B$ be the optimal elements $O$ that are processed by $A$ and $B$, respectively.  The main observation that allows utilizing parts of the analysis of the monotone algorithm for the analysis of the above algorithm for non-monotone functions is that Algorithm~\ref{algo:non-monotone} is equivalent to running Algorithm~\ref{algo:original-matroid} twice, once over $N_A$ and once over $N_B$, to obtain $A'$ and $B'$ respectively. However, note that $N_A$ and $N_B$ are not initially known and that whether $e_i \in N_A$ or $e_i \in N_B$ crucially depends on $A_{i-1}$ and $B_{i-1}$, which is why we cannot simply call Algorithm~\ref{algo:original-matroid} over $N_A$ and $N_B$.

\begin{lemma}
    Let $A'_{nm}$ and $B'_{nm}$ be the sets $A'$ and $B'$ computed by Algorithm~\ref{algo:non-monotone} over an arbitrary function $f$, matroid $\mathcal M$. Let $A'_{m}$ and $B'_{m}$ be the sets computed by Algorithm~\ref{algo:original-matroid} over $f, \mathcal M, N_A$ and $f, \mathcal M, N_B$, respectively. Then, we have that $A'_{nm} = A'_m$ and $B'_{nm} = B'_m$
\end{lemma}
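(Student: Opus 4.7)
The plan is to proceed by induction on the iteration number $i$, showing that the sets maintained by Algorithm~\ref{algo:non-monotone} after processing $e_1,\ldots,e_i$ agree exactly with the states that Algorithm~\ref{algo:original-matroid} would reach after processing the corresponding prefix of $N_A$ (resp.\ $N_B$). Precisely, writing $A_i^{nm}, (A')_i^{nm}$ for the sets in Algorithm~\ref{algo:non-monotone} after iteration $i$, and $A_j^m, (A')_j^m$ for the sets in Algorithm~\ref{algo:original-matroid} run on $N_A$ after its $j$-th iteration, the invariant to maintain is
\[
A_i^{nm} = A_{j(i)}^m, \qquad (A')_i^{nm} = (A')_{j(i)}^m, \qquad j(i) = |N_A \cap \{e_1,\ldots,e_i\}|,
\]
together with equality of the stored weights $\{\delta_a\}_{a \in (A')_i^{nm}}$, and the symmetric statement for $B,B'$ with respect to $N_B$. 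Note that $N_A$ and $N_B$ are well-defined quantities derived from the execution of Algorithm~\ref{algo:non-monotone}, so the invariant is meaningful.

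The base case $i = 0$ is immediate, since both algorithms initialize all sets to $\emptyset$ and there are no stored weights. For the inductive step at iteration $i$, the key observation is that the branching test $f(e_i \mid A_{i-1}) > f(e_i \mid B_{i-1})$ in Algorithm~\ref{algo:non-monotone} is exactly the condition defining membership in $N_A$, evaluated at the current non-monotone states $A_{i-1}^{nm}, B_{i-1}^{nm}$. If $e_i \in N_A$, Algorithm~\ref{algo:non-monotone} computes $\delta_{e_i} = f(e_i \mid A_{i-1}^{nm})$ and calls \textsc{Process} with the tuple $(A_{i-1}^{nm}, (A')_{i-1}^{nm}, \{\delta_a\}_{a \in (A')_{i-1}^{nm}}, \delta_{e_i}, \beta)$. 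By the inductive hypothesis this tuple coincides with the one that Algorithm~\ref{algo:original-matroid} would pass to its \textbf{if}/\textbf{else} block at the $(j(i-1)+1)$-st iteration of its run on $N_A$; since \textsc{Process} is deterministic, both produce identical updated $A, A'$ and identical newly-stored $\delta_{e_i}$, yielding the invariant for $A, A'$ at index $i$. The sets $B, B'$ are untouched by Algorithm~\ref{algo:non-monotone}, and $e_i \notin N_B$ means the monotone run on $N_B$ also does not advance, so its invariant is preserved trivially. The case $e_i \in N_B$ is symmetric.

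The only subtlety, and what I expect to be the main obstacle, is ensuring the invariant is strong enough: \textsc{Process} uses the stored weights $\delta_a$ for the candidates $a^{*}$ in later iterations, so the inductive hypothesis must include equality of these stored values, not merely equality of the sets $A, A', B, B'$. Once the invariant is stated this way, the induction is essentially a bookkeeping check that the branching condition of Algorithm~\ref{algo:non-monotone} mirrors the definition of $N_A, N_B$, and that \textsc{Process} is a deterministic function of the tuple it receives. Taking $i = n$ in the invariant yields $A'_{nm} = A'_m$ and $B'_{nm} = B'_m$ as required.
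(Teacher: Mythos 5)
Your proof is correct and takes essentially the same approach as the paper: the paper's entire proof is the single observation that \textsc{Process} is identical to Lines 5--12 of Algorithm~\ref{algo:original-matroid}, and your induction (with the invariant correctly strengthened to include the stored weights $\delta_a$) is just that observation spelled out in full. Nothing is missing.
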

\begin{proof}
    Observe that the \textsc{Process} subroutine is identical to Lines~5-12 of Algorithm~\ref{algo:original-matroid}.
\end{proof}

Since all the previous lemmas  hold for non-monotone functions (monotonicity was only used in the proof of the main theorem for monotone functions), the above lemma implies that previous lemmas  apply to $A'$ and $B'$ over ground sets $N_A$ and $N_B$. The only new lemma needed is the following.

\begin{restatable}{lemma}{nmmain}
\label{lem:nmmain}
    For any submodular function $f$, consider $A_n, B_n, A'_n, B'_n$ from Algorithm~\ref{algo:non-monotone}, we have 
    $\max\{f(O \cup B_n) - f(B_n), f(O \cup A_n) - f(A_n)\} \leq (1+\beta)(f(B'_n) + f(A'_n)).$
\end{restatable}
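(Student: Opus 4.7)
The plan is to prove the two inequalities implicit in the max separately (they are symmetric under $A \leftrightarrow B$); focus on $f(O \cup A_n) - f(A_n) \leq (1+\beta)(f(A'_n) + f(B'_n))$. Write $O = O_A \sqcup O_B$ where $O_A = O \cap N_A$ and $O_B = O \cap N_B$; both are independent as subsets of $O$. By the preceding lemma, $A'_n, A_n$ coincide with the output of Algorithm \ref{algo:original-matroid} on ground set $N_A$ (and symmetrically for $B, N_B$), so all the results of Section 2 can be invoked on each sub-instance.

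First, peel off the contribution of $O_A$ using Lemma \ref{lem:primeopt} on the $N_A$-sub-instance with independent set $S = O_A$ in place of $O$ (this substitution is legitimate because the proof of Lemma \ref{lem:primeopt} goes through Lemma \ref{lemma:graph-construction}, which applies to any independent set). This yields $f(O_A \cup A_n) \leq f(A_n) + (1+\beta) f(A'_n)$.

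Next, handle the remaining cross-over from $O_B$. Since $O_B \cap (O_A \cup A_n) = \emptyset$ (elements of $N_B$ are never added to $A$), submodularity gives $f(O \cup A_n) - f(O_A \cup A_n) \leq \sum_{e \in O_B} f(e \mid O_A \cup A_n)$. For each $e \in O_B$ processed at iteration $i$, the chain $A_{i-1} \subseteq O_A \cup A_n$ combined with submodularity and the defining inequality of $N_B$, $f(e \mid A_{i-1}) \leq f(e \mid B_{i-1}) = \delta_e$, produces $f(e \mid O_A \cup A_n) \leq \delta_e$. To bound $\sum_{e \in O_B} \delta_e \leq (1+\beta) f(B'_n)$, apply Lemma \ref{lemma:graph-construction} to the $N_B$-sub-instance with $S = O_B$ to obtain an injection $\phi_B: O_B \to B'_n$ with $\delta_{\phi_B(e)} \geq \delta_e/(1+\beta)$, and then mimic the tail of the proof of Lemma \ref{lem:primeopt}. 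The one subtlety is the implicit side fact that $\delta_b \geq 0$ for every $b \in B'_n$ (needed to extend the sum over $\phi_B(O_B)$ to all of $B'_n$); this follows by a short induction, since an element enters $B'$ either through the first branch (which demands $\delta_e \geq 0$) or by a swap against an $a^*$ with $\delta_{a^*} \geq 0$, which again forces $\delta_e \geq 0$.

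Adding the two bounds gives the inequality for $A$, and the identical argument with $A$ and $B$ exchanged gives the inequality for $B$; the max follows. The main obstacle is the cross-over term $\sum_{e \in O_B} f(e \mid O_A \cup A_n)$: this is the only place where the non-monotone analysis needs anything beyond Section 2, and the two-query comparison in Algorithm \ref{algo:non-monotone} that defines $N_A, N_B$ is precisely what makes $f(e \mid A_{i-1}) \leq \delta_e$ available for $e \in O_B$, allowing the cross-over to be charged to $B'_n$ rather than being left unbounded.
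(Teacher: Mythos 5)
Your proposal is correct and follows essentially the same route as the paper: the same decomposition $f(O\cup A_n)-f(A_n)=[f(O\cup A_n)-f(O_A\cup A_n)]+[f(O_A\cup A_n)-f(A_n)]$, with the second term charged to $(1+\beta)f(A'_n)$ via the injection of Lemma~\ref{lemma:graph-construction} and the cross-over term charged to $(1+\beta)f(B'_n)$ using $f(e\mid A_{i-1})\le f(e\mid B_{i-1})=\delta_e$ for $e\in O_B$. Your explicit justifications (that Lemma~\ref{lem:primeopt} extends to any independent set, and that $\delta_b\ge 0$ for all $b\in B'_n$ by induction on the \textsc{Process} branches) are points the paper leaves implicit, but they do not constitute a different argument.
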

\begin{proof}
 Since $O = O_A \cup O_B$, we get that
    \begin{align*}
        f(O \cup A_n) - f(A_n)  = f(O_B \cup O_A \cup A_n) - f(O_A \cup A_n) + f(O_A \cup A_n)  - f(A_n).
    \end{align*}
    Next, we have 
    \begin{align*}
        & f(O_B \cup O_A \cup A_n) - f(O_A \cup A_n)\\
         \leq & \sum_{e_j \in O_B} f(e_j | O_A \cup A_n) & \text{submodularity} \\
         \leq & \sum_{e_j \in O_B} f(e_j | A_{i-1}) & \text{submodularity,} A_i \subseteq O_A \cup A_n, \\
       \leq  &  \sum_{e_j \in O_B} f(e_j | B_{i-1}) & \text{definition of $O_B$ and Algorithm~\ref{algo:non-monotone}} \\
       <  &  \sum_{e_j \in O_B} (1+\beta) f( e_{\phi_{O_B}(e_j)} | B_{\phi_{O_B}(e_j)-1}) & \text{Lemma~\ref{lemma:graph-construction}} \\
       \leq  &  \sum_{\substack{e_i \in B'_n: \\  i = \phi_{O_B}(e_j)  \\ \text{for some } e_j \in O_B}} (1+\beta) f(e_i | B_{i-1}) & \text{$e_{\phi_{O_B}(e_i)} \in B'_n$ for $e_i \in O_B$}\\
              \leq  &  \sum_{e_i \in B'_n} (1+\beta) f(e_i | B_{i-1}) & \text{$f(e_i | B_{i-1}) \geq 0$ for  $e_i \in B_i$} \\
       \leq  &  \sum_{e_i \in B'_n} (1+\beta) f(e_i | B'_{i-1}) & \text{submodularity}\\
       =  &  (1+\beta)f(B'_n) &
    \end{align*}
    
It is important to note that to apply submodularity for the second inequality, we have that $e_j \not \in O_A \cup A_n \text{ for } e_j \in O_B$. This holds since elements in $O_A \cup A_n$ are processed by $A$ and elements in $O_B$ are processed by $B$. We also note that $f(e_i | B_{i-1}) \geq 0$ for all $e_i \in B_i$ is by definition of  \textsc{Process} subroutine.

Next, observe that by submodularity we have
\begin{align*}
    f(O_A \cup A_n)  - f(A_n) \leq \sum_{e_j \in O_A \setminus A_n} f(e_j |A_n) \leq \sum_{e_j \in O_A \setminus A_n} f(e_j |A_{i-1})
\end{align*}
We get, similarly as above,
\begin{align*}
    &\sum_{e_j \in O_A \setminus A_n} f(e_j |A_{i-1})& \\
    < & \sum_{e_i \in O_A \setminus A_n} (1+\beta) f( e_{\phi_{O_A}(e_i)} | A_{\phi_{O_A}(e_i)-1}) &\text{Lemma~\ref{lemma:graph-construction}} \\
       \leq  &  \sum_{\substack{e_i \in A'_n: \\  i = \phi_{O_A}(e_j)  \\ \text{for some } e_j \in O_A \setminus A_n}} (1+\beta) f(e_i | A_{i-1}) & \text{$e_{\phi_{O_A}(e_i)} \in A'_n$ for $e_i \in O_A$}\\
              \leq  &  \sum_{e_i \in A'_n} (1+\beta) f(e_i | A_{i-1}) & \text{$f(e_i | A_{i-1}) \geq 0$ for  $e_i \in A_i$} \\
       \leq  &  \sum_{e_i \in A'_n} (1+\beta) f(e_i | A'_{i-1}) & \text{submodularity}\\
       =  &  (1+\beta)f(A'_n) &
\end{align*}
By combining the four previous series inequalities, we obtain that
  \begin{align*}
        f(O \cup A_n) - f(A_n) &   = f(O_B \cup O_A \cup A_n) - f(O_A \cup A_n) + f(O_A \cup A_n)  - f(A_n) \\
        & <  (1+\beta)f(B'_n) +  (1+\beta)f(A'_n).
    \end{align*}
We also have that $ f(O \cup B_n) - f(B_n)    <  (1+\beta)f(B'_n) +  (1+\beta)f(A'_n),$ which follows identically as for the bound on $f(O \cup A_n) - f(A_n)$. We conclude that $\max\{f(O \cup B_n) - f(B_n), f(O \cup A_n) - f(A_n)\} \leq (1+\beta)(f(B'_n) + f(A'_n)).$
\end{proof}

We are now ready to prove the main result for non-monotone functions.

\begin{theorem} For \nmon, Algorithm~\ref{algo:non-monotone} has query complexity $2n$ and achieves a $1/(6 + 4\sqrt{2}) \approx 1/ 11.66$ approximation.
\end{theorem}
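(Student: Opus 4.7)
The plan is to combine the two lemmas already proven with one short submodularity step that exploits the disjointness baked into the algorithm. The query complexity is immediate: at iteration $i$ only $f(A_{i-1}+e_i)$ and $f(B_{i-1}+e_i)$ are evaluated, and the values $f(A_{i-1})$ and $f(B_{i-1})$ are carried forward from the most recent iterations in which $A$ or $B$ was updated (or equal $0$), so at most $2n$ queries to $f$ are needed in total.

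For the ratio, the first observation I would make is that the \textbf{if}/\textbf{else} in Algorithm~\ref{algo:non-monotone} routes each element to exactly one of the two \textsc{Process} calls, so $A_n \cap B_n = \emptyset$. Applying submodularity to $X = O \cup A_n$ and $Y = O \cup B_n$ gives $X \cap Y = O$, and together with non-negativity of $f$ this yields
\[
f(O) \;\le\; f(O \cup A_n) + f(O \cup B_n).
\]
Next, I would apply Lemma~\ref{lem:nmmain} to both summands, obtaining
\[
f(O) \;\le\; f(A_n) + f(B_n) + 2(1+\beta)\bigl(f(A'_n) + f(B'_n)\bigr).
\]
By the equivalence noted just before Lemma~\ref{lem:nmmain}, Algorithm~\ref{algo:non-monotone} is equivalent to running Algorithm~\ref{algo:original-matroid} on $N_A$ and on $N_B$, so Lemma~\ref{lem:primehalf} applies to both $(A_n, A'_n)$ and $(B_n, B'_n)$. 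Substituting $f(A_n) \le \frac{1+\beta}{\beta} f(A'_n)$ and the analogous bound for $B$, and using $\max\{f(A'_n), f(B'_n)\} \ge \frac{1}{2}(f(A'_n) + f(B'_n))$, I arrive at
\[
f(O) \;\le\; \frac{2(1+\beta)(1+2\beta)}{\beta}\,\max\{f(A'_n), f(B'_n)\}.
\]
Since the algorithm returns the larger of $f(A'_n)$ and $f(B'_n)$, the approximation ratio is $\frac{\beta}{2(1+\beta)(1+2\beta)}$. Expanding the denominator as $\frac{2}{\beta} + 6 + 4\beta$ and setting its derivative to zero gives the optimizer $\beta = 1/\sqrt{2}$, at which the denominator equals $6 + 4\sqrt{2}$, yielding the claimed ratio.

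The only conceptually nontrivial step is the passage from $f(O)$ to $f(O \cup A_n) + f(O \cup B_n)$: in the monotone setting of Section~2 one simply had $f(O) \le f(O \cup A_n)$, whereas non-monotonicity now forbids dropping either term directly. The disjointness $A_n \cap B_n = \emptyset$, which is a deliberate consequence of the two-track design of Algorithm~\ref{algo:non-monotone}, is exactly what lets submodularity carry the argument through; without it, running two parallel copies would not improve over Algorithm~\ref{algo:original-matroid}.
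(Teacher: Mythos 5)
Your proposal is correct and follows essentially the same route as the paper: the identical $2n$ query accounting, the submodularity-plus-non-negativity step exploiting $A_n \cap B_n = \emptyset$ to get $f(O) \le f(O \cup A_n) + f(O \cup B_n)$, then Lemma~\ref{lem:nmmain} and Lemma~\ref{lem:primehalf} (via the equivalence with running Algorithm~\ref{algo:original-matroid} on $N_A$ and $N_B$), and finally optimizing $\frac{2}{\beta}+6+4\beta$ at $\beta = 1/\sqrt{2}$. No gaps; your closed form $\frac{2(1+\beta)(1+2\beta)}{\beta}$ is just the paper's $2\bigl(2(1+\beta)+\frac{1+\beta}{\beta}\bigr)$ rewritten.
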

\begin{proof}
At iteration $i$, the algorithm evaluates $f(e_i|A_{i-1}) = f(A_{i-1} + e_i) - f(A_{i-1})$ and $f(e_i|B_{i-1}) = f(B_{i-1} + e_i) - f(B_{i-1})$. Since queries $f(A_{i-1})$ and $f(B_{i-1})$ have already been evaluated in a previous iteration, the algorithm performs two queries at iteration $i$, $f(A_{i-1} + e_i)$ and $f(e_i|B_{i-1})$, and the total number of queries is thus $2n$. For the approximation, we have that
    \begin{align*}
        f(O) & =  f(O \cup (A_n \cap  B_n)) &  A_n \cap B_n = \emptyset \\
        &\leq  f(O \cup A_n \cup B_n) + f(O \cup (A_n \cap  B_n)) & \text{non-negativity}\\
        & \leq f(O \cup A_n) + f(O \cup B_n)  & \text{submodularity} \\
       & \leq  2(1+\beta)(f(B'_n) + f(A'_n)) + f(A_n) + f(B_n) &\text{Lemma~\ref{lem:nmmain}}\\
        & \leq  2(1+\beta)(f(B'_n) + f(A'_n) )+  (1+\beta)f(A'_n)/\beta +  (1+\beta)f(B'_n)/\beta &\text{Lemma~\ref{lem:primehalf}}\\
        & \leq 2(2(1+\beta) + (1+\beta)/\beta)  \max\{f(A_n'), f(B_n')\}.
    \end{align*}
    Finally, $2(2(1+\beta) + \frac{1+\beta}{\beta})$ is minimized at $\beta = 1/\sqrt{2}$, where $2(2(1+\beta) + \frac{1+\beta}{\beta}) = 6 + 4 \sqrt{2}.$
  \end{proof}

\section{Approximation algorithm for \pmat{} with linear query complexity} \label{sec:pmat}
\paragraph{Description of the algorithm.}
In overview, The Alg.\ref{alg:quick_p_matchoid} is the extension of Alg.\ref{algo:original-matroid} from the matroid constraint to the $p$-matchoid constraint and keeps the key feature $n$ queries unchanged. Compared with Alg.\ref{algo:original-matroid}, Alg.\ref{alg:quick_p_matchoid} should use Alg.\ref{alg:exchange_p_matchoid} as a subroutine to select candidates for exchange.

\paragraph{Relationship to \citet{Feldman2018a}.} Our subroutine Alg.\ref{alg:exchange_p_matchoid}: \excha{} is similar to the Alg.1 in \citet{Feldman2018a}. The difference is that we don't have any function queries in the \excha{}, instead, we use the previous query value $\delta_x$ of element $x$ for calculation without re-queries to return the exchanged candidates set $C$, which helps us keep exactly one query per element. Besides, we keep the same ratio $1/(4p)$ compared with \citet{Feldman2018a}, but ours is a deterministic algorithm.

\begin{algorithm}[ht]
\DontPrintSemicolon
\textbf{Input:} Oracle $f$, ground set $\mathcal{N}$, $p$-matchoid $ \mathcal{M} = (\mathcal{N}, \mathcal{I})$, parameter $\beta$ \;
$A, A' \gets \varnothing$ \;
\For{$e \in \mathcal{N} $}{
        Let $\delta_e \gets f(e \mid A)$  
        \label{line:query_p_matchoid} \;
        Let $C \gets $ \excha{}$(\mathcal{M}, A', e, \{\delta_x\}_{x \in A'})$ \;
        \If{$\delta_e \geq (1 + \beta) \cdot \sum_{a^* \in C} \delta_{a^*}$ \label{line:exchange_p_mat} }{
            $A \gets A  + e$ \;
            $A' \gets A' \setminus C + e$ \label{line:update_Aprime_p_mathchoid} \;
        }
    }
\textbf{return} $A'$
\caption{\quickpm{}: A $1 / (4p)$-approximation algorithm for monotone submodular maximization under a $p$-matchoid constraint}
\label{alg:quick_p_matchoid}
\end{algorithm}

\begin{algorithm}[ht]
\DontPrintSemicolon
\textbf{Input:} $p$-matchoid $ \mathcal{M} = (\mathcal{N}, \mathcal{I})$, set $A'$, element $e$, contributions $\{\delta_x\}_{x \in A'}$  \;
$C \gets \varnothing$ \;
\For{$\ell = 1$ to $m$}{
        \If{$(A' + e) \cap \mathcal{N}_\ell \notin \mathcal{I}_\ell$}{
            Let $X_\ell \gets \{x \in A' \mid ((A' - x + e) \cap \mathcal{N}_\ell) \in \mathcal{I}_\ell\}$ \;
            Let $x_\ell \gets \arg\min_{x \in X_\ell} \delta_x$ \;
            $C \gets C + x_\ell$. \;
            }
        }
\textbf{return} $C$ 
\caption{The \excha{} subroutine}
\label{alg:exchange_p_matchoid}
\end{algorithm}

\subsection{Overview of analysis}
The analysis proceeds by first relating $\delta( \cdot )$ with $f(\cdot)$, which are presented in Proposition \ref{prop:Aprime_delta_p_matchoid} and \ref{prop:A_delta_p_matchoid}. Then, we could build a connection between $\Delta_{i+1} \delta(A')$ and $\Delta_{i+1} \delta(A)$, which is presented in Lemma \ref{lemm:delta_p_matchoid}. By these relationships, we could bound $f(A')$ by $f(A\cup O)$. Considering the monotonicity of $f$, we could get Theorem \ref{theo:quick_p_matchoid}. At first, we have two propositions about the original function $f(\cdot )$ and the helper function $\delta( \cdot )$. 

\begin{proposition}  \label{prop:Aprime_delta_p_matchoid} We have that
    $f(A') \geq \delta(A').$
\end{proposition}

\begin{proof} Observe that
    \begin{align*}
        f(A') &= \sum^{|A'|}_{j=1} f(e_j \mid  \{e_1, \dots, e_{j-1} \})  
                = \sum^{|A'|}_{j=1} f(e_j \mid A'_{j-1}-1)  \\
                &\geq \sum^{|A'|}_{j=1} f(e_j \mid A_{i(e_j)} ) \tag{By Submodularity and $  A'_{j-1} \subseteq A_{i(e_j)-1} $} \\
                &= \sum_{e_j \in A'} \delta(e_j) 
                = \delta(A') \qedhere
    \end{align*}
\end{proof}

\begin{proposition}  \label{prop:A_delta_p_matchoid} We have that
    $f(A) = \delta(A).$
\end{proposition}

\begin{proof} Observe that
    \begin{align*}
        f(A) &= \sum^{|A|}_{i=1} f(e_i \mid  \{e_1, \dots, e_{i-1} \})  
                = \sum^{|A|}_{i=1} f(e_i \mid A_{i-1} ) \\
                &= \sum_{e_i \in A} \delta(e_i)  
                = \delta(A) \qedhere
    \end{align*}
\end{proof}

\begin{lemma} \label{lemm:A_minus_Aprime_p_matchoid}
    Let $(f, \mathcal{M})$ be an instance of \pmat, and let $A'$, $A$ be produced by Alg.\ref{alg:quick_p_matchoid} on this instance. let $\delta(\cdot)$ be a function $ \sum f(\cdot \mid A)$. Then
    $$
    \delta(A \setminus A') \leq \frac{1}{\beta} \cdot \delta(A'). 
    $$
\end{lemma}

\begin{proof}
    At first, consider the increment of the helper function $\delta(A')$, we have:
    \begin{align*}
        \Delta_{i+1} \delta(A') &= \sum_{e_i \in A'_i} \delta_{e_i} - \sum_{e_i \in A'_{i-1}} \delta_{e_i}  \\
                            &= \delta_e -  \sum_{a^* \in C_{i}} \delta_{a^*}  \tag{By Line \ref{line:update_Aprime_p_mathchoid} in Alg.\ref{alg:quick_p_matchoid}} \\
                            &\geq \beta \cdot \sum_{a^* \in C_{i}} \delta_{a^*}  \tag{By Line \ref{line:exchange_p_mat} in Alg.\ref{alg:quick_p_matchoid}}    \\   
                            &= \beta \cdot \delta(C_i)
    \end{align*}

    Now, we observe that every element of $A\setminus A'$ must have been removed exactly once from the final solution $A'$, which implies that $\{ C_i \mid e_i \in A \}$ are disjoint partitions of $A\setminus A'$. Using this observation, we get:
    
    \begin{align*}
        \delta(A \setminus A') &= \sum_{e_i \in A} \sum_{a^* \in C_{i}} \delta_{a^*}   
                            = \sum_{e_i \in A}   \delta(C_i) \\
                            &\leq \sum_{e_i \in A}  \frac{1}{\beta} \cdot \Delta_{i+1} \delta(A')  \\
                            &= \frac{1}{\beta} \cdot (\delta(A') - \delta(\varnothing)) \tag{By Telescoping Sum}  \\
                            &\leq \frac{1}{\beta} \cdot \delta(A') \tag{By Non-negativity of $f$}
    \end{align*}
\end{proof}

\begin{lemma}  \label{lemm:delta_p_matchoid}
    Lef $f$ be a submodular function (not necessarily monotone). Then, for $i \in [n]$, we have $\Delta_{i+1} \delta(A') \geq \frac{\beta}{1+\beta} \cdot \Delta_{i+1} \delta(A) $.
\end{lemma}

\begin{proof}
    \begin{align*}
        \Delta_{i+1} \delta(A') &= \sum_{e_j \in A'_i} \delta_{e_j} - \sum_{e_j \in A'_{i-1}} \delta_{e_j}  \\
                            &= \delta_e -  \sum_{a^* \in C_{i}} \delta_{a^*}  \tag{By Line \ref{line:update_Aprime_p_mathchoid} in Alg.\ref{alg:quick_p_matchoid}} \\
                            &\geq \frac{\beta}{1 + \beta} \delta_e   \tag{By Line \ref{line:exchange_p_mat} in Alg.\ref{alg:quick_p_matchoid}}    \\                   
                            &=   \frac{\beta}{1+\beta} \left(   \sum_{e_j \in A_i} \delta_{e_j} - \sum_{e_j \in A_{i-1}} \delta_{e_j}  \right)  \\
                            &= \frac{\beta}{1 + \beta} \cdot \Delta_{i+1} \delta(A) 
    \end{align*}
\end{proof}

\begin{lemma}   \label{lemm:Aprime_A_p_matchoid}
    Let $(f, \mathcal{M})$ be an instance of \pmat, and let $A'$, $A$ be produced by Alg.\ref{alg:quick_p_matchoid} on this instance. Then
    $$f(A') \leq \frac{\beta}{\beta + 1} \cdot f(A).$$
\end{lemma}

\begin{proof}
    \begin{align*}
        f(A'_n) &\geq \delta(A'_n)  \tag{By Proposition \ref{prop:Aprime_delta_p_matchoid}} \\
                &= \sum^n_{i=1} \Delta_{i+1} \delta(A')  \tag{By Telescoping Sum}  \\
                &\geq  \sum^n_{i=1} \frac{\beta}{1 + \beta} \cdot \Delta_{i+1} \delta(A)  \tag{By Lemma \ref{lemm:delta_p_matchoid}}  \\
                &=  \frac{\beta}{1 + \beta}  \sum_{e_j \in A_n} \delta_{e_j} 
                = \frac{\beta}{1 + \beta}  \cdot  \delta(A_n) \tag{By Telescoping Sum} \\
                &= \frac{\beta}{1 + \beta}  \cdot  f(A_n) \tag{By Proposition \ref{prop:A_delta_p_matchoid}}
    \end{align*}
\end{proof}

\begin{reduction}  \label{redu:quick_p_matchoid}
    For the sake of analyzing the approximation ratio of Alg.\ref{alg:quick_p_matchoid}, one may assume that every element $e \in \mathcal{N}$ belongs to exactly $p$ out of the $m$ ground sets $\mathcal{N}_1, \dots, \mathcal{N}_m$ of the matroids defining $\mathcal{M}$.
\end{reduction}

\begin{proof}
    This reduction is trivial, we could add the element $e$ to $(p-p')$ additional matroids as a free element, whose addition will keep the independence of matroids. These additions will not affect the behavior of Alg.\ref{alg:exchange_p_matchoid}.
\end{proof}

By Reduction \ref{redu:quick_p_matchoid}, we have a proposition to prove the approximation ratio (analagous
to Proposition 10 of \citet{Feldman2018a}).
\begin{property} \label{prop:exchange_p_matchoid}
    For every set $T \in \mathcal{I}$, there exists a mapping $\phi_T$ from elements of $T$ to multi-subsets of $A_n$ such that:
    \begin{enumerate}
        \item every element $e \in A'_n$ appears at most $p$ times in the multi-sets of $\{ \phi_T(e) \mid  e \in T \} $.
        \item every element $e \in A_n \setminus A'_n$ appears at most $(p-1)$ times in the multi-sets of $\{ \phi_T(e) \mid  e \in T \} $.
        \item every element $e \in T \setminus A_n$ obeys $ \delta_{e} \leq (1+\beta) \cdot \sum_{a^* \in \phi_T(e)} \delta_{a^*} $. 
        \item every element $e \in T \cap A$ obeys $\delta_{e} \leq \delta_{a^*}, \forall a^* \in \phi_T(e)$ and the multi-set $\phi_T(e)$ contains exactly $p$ elements (including repetitions).
    \end{enumerate}
\end{property}
The proof of Property \ref{prop:exchange_p_matchoid} is exactly analagous to
the proof of Proposition 10 in \citet{Feldman2018a}, and we omit it.

\begin{lemma} \label{lemm:ratio_p_matchoid}
    Let $(f, \mathcal{M})$ be an instance of \pmat with optimal solution $O$, and let $A'$, $A$ be produced by Alg.\ref{alg:quick_p_matchoid} on this instance. Then
    $$  f(A \cup O) \leq \frac{(1+\beta)^2 \cdot p }{\beta} f(A')  $$
\end{lemma}

\begin{proof}
    \begin{align*}
        f(A \cup O) - f(A) &= \sum_{e_{j} \in O \setminus A} f(e_{j} \mid A_n \cup \{e_1, \dots, e_{j-1}  \}) \\
                        &\leq \sum_{e_{j} \in O \setminus A} f(e_{j} \mid A_n) \tag{By Submodularity}\\
                        &\leq  \sum_{e_{j} \in O \setminus A} f(e_{j} \mid A_{i(e_j)-1})  \tag{By Submodularity and $A_{i(e_j)-1} \subseteq A_n$} \\
                        &= \sum_{e_j \in O \setminus A} \delta_{e_j} \\
                        &\leq  \sum_{e_j \in O \setminus A} \left( (1+\beta) \cdot  \sum_{a^* \in \phi_O(e_j)} \delta_{a^*}  \right) \tag{By Property \ref{prop:exchange_p_matchoid}.(3)} 
    \end{align*}

    Additionally, 
    \begin{align*}
            & \sum_{e_j \in O \setminus A} \cdot \sum_{a^* \in \phi_O(e_j)} \delta_{a^*}  + p \cdot \sum_{e_j \in O \cap A} \delta_{e_j} \\
         \leq &  \sum_{e_j \in O} \cdot \sum_{a^* \in \phi_O(e_j)} \delta_{a^*} \tag{By Property \ref{prop:exchange_p_matchoid}.(4)}  \\
         \leq & \; p \cdot \sum_{a^* \in A'} \delta_{a^*} +  (p-1) \cdot \sum_{a^* \in A \setminus A'} \delta_{a^*}  \tag{By Property \ref{prop:exchange_p_matchoid}.(1,2)}   \\
         \leq & \; p \cdot \delta(A') +  (p-1) \cdot \delta(A \setminus A') \\
         \leq & \; p \cdot \delta(A') +  \frac{p-1}{\beta} \cdot \delta(A') \tag{By Lemma \ref{lemm:A_minus_Aprime_p_matchoid}} \\
         \leq & \; p \cdot f(A') +  \frac{p-1}{\beta} \cdot f(A') \tag{By Proposition \ref{prop:Aprime_delta_p_matchoid}} \\
         = &  \frac{(1+\beta)\cdot p - 1}{\beta} \cdot f(A')
    \end{align*}
    
    Thus, 
    \begin{align*}
        f(A \cup O) &\leq f(A) +  (1+\beta) \sum_{e \in O \setminus A} \cdot \sum_{a^* \in \phi_O(e)} \delta_{a^*} \\
                    &\leq f(A) +  (1+\beta) \cdot \left[ \frac{(1+\beta)\; p - 1}{\beta} \cdot f(A') -  p \cdot \sum_{e \in O \cap A} \delta_e  \right]  \\
                    &\leq \frac{1 +\beta}{\beta} \cdot f(A') +  (1+\beta) \cdot \left[ \frac{(1+\beta) \; p - 1}{\beta} \cdot f(A') -  p \cdot \sum_{e \in O \cap A} \delta_e  \right]  \tag{By Lemma \ref{lemm:Aprime_A_p_matchoid} } \\
                    &= \frac{(1+\beta)^2 \; p }{\beta} \cdot f(A') -  (1+\beta) \; p \cdot \sum_{e \in O \cap A} \delta_e  \\
                    &\leq \frac{(1+\beta)^2 \; p }{\beta} \cdot f(A')  \tag{By Non-negativity of $f$}
    \end{align*}

\end{proof}

\begin{theorem} \label{theo:quick_p_matchoid}
    The algorithm \ref{alg:quick_p_matchoid} is a $1 / (4p)$-approximation algorithm for \pmat{} with query complexity $n$ exactly.
\end{theorem}

\begin{proof}
    For the query complexity, we only query the function $f$ in line \ref{line:query_p_matchoid} of Alg.\ref{alg:quick_p_matchoid}, the following values of $\delta$ in Alg.\ref{alg:exchange_p_matchoid} do not need to be recomputed. Thus, the query complexity is $n$ exactly.

    For the approximation ratio, observe that:
    $$ 
    f(O) \leq_{(1)} f(O \cup A) \leq_{(2)}  \frac{(1+\beta)^2 \cdot p }{\beta} f(A') 
    $$
    where inequality (1) is for monotonicity and inequality (2) is according to Lemma \ref{lemm:ratio_p_matchoid}. The $1 / (4p) $ ratio follows from optimizing over $\beta \in [0,\infty)$ (the ratio is optimized at $\beta = 1$).
\end{proof}

\section{Experimental results} \label{sec:exp} 

\paragraph{Benchmarks.} We experimentally compare \quickmax{} to other algorithms with low query complexity for \mon{}. The algorithm of \citet{Chakrabarti2015b}, which we refer to as \textbf{CK}, achieves a $1/4$-approximation in $2n$ queries by also making a single pass over the elements. \textbf{Lazy greedy} (\citep{Minoux1978Lazy}) achieves the same $1/2$ approximation and $n \cdot \text{rank}(\mathcal M)$ query complexity as greedy, but achieves a smaller number of queries in practice than greedy by lazily evaluating the marginal contributions. The algorithm of \citet{Badanidiyuru2014b}, which  we refer to as \textbf{threshold greedy}, achieves a $1/2 - \epsi$ approximation in $\mathcal O((n/\epsi) \log(k / \epsi))$ queries by iteratively decreasing a threshold and adding elements with marginal contribution over the current threshold to the current solution. Additional details on these algorithms and their implementation are provided in  Appendix~\ref{sec:appexperiments}. 

Given some fixed ordering with which elements of the ground set are considered, these algorithms are all deterministic. Thus, for each problem instance, we choose 5 random orderings of the ground set, and then run each algorithm on each of these 5 orderings for each matroid, plotting the corresponding average results over these runs.

 \begin{figure*}[t]
	\centering
	\begin{subfigure}[c]{\textwidth}
		\includegraphics[width=0.33\textwidth]{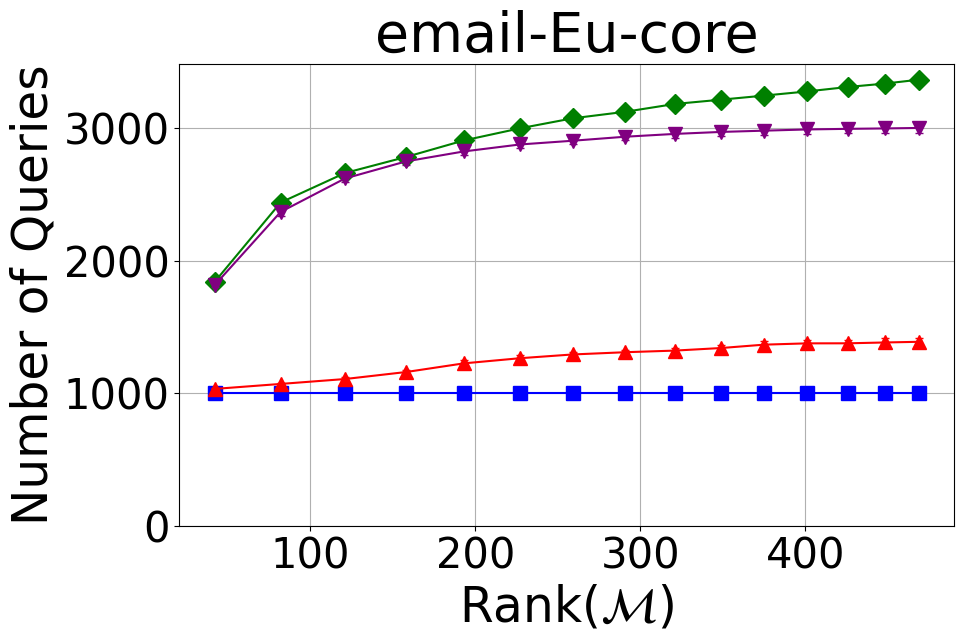}
		\includegraphics[width=0.33\textwidth]{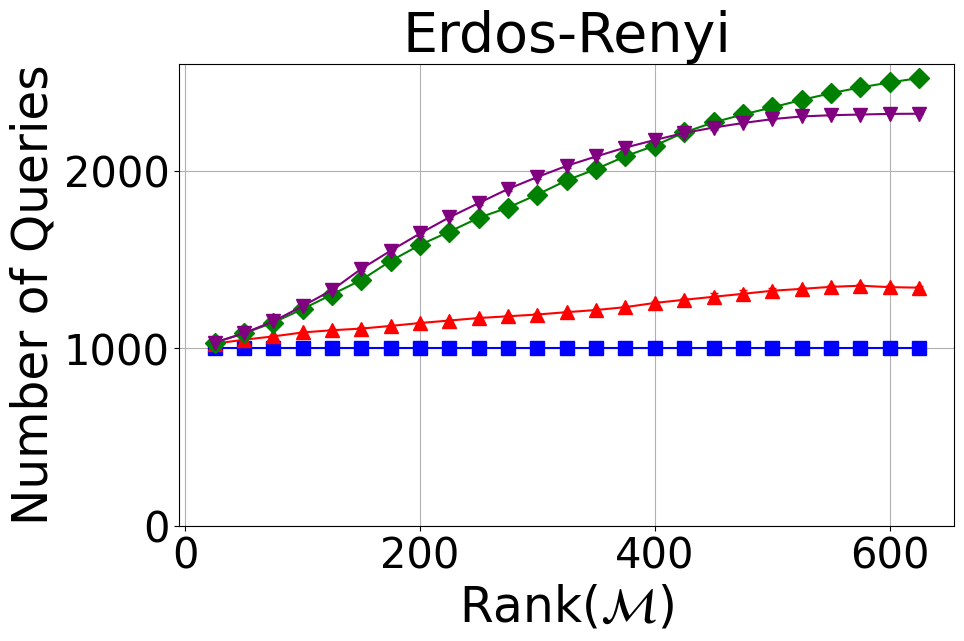}
		\includegraphics[width=0.33\textwidth]{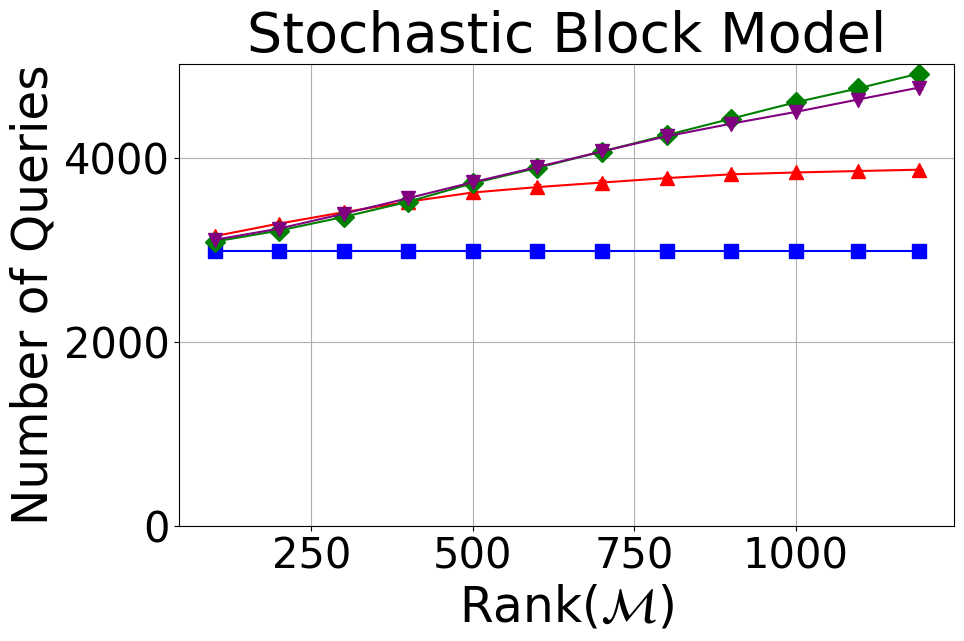}
	\end{subfigure} %
	\begin{subfigure}[c]{\textwidth}
		\includegraphics[width=0.33\textwidth]{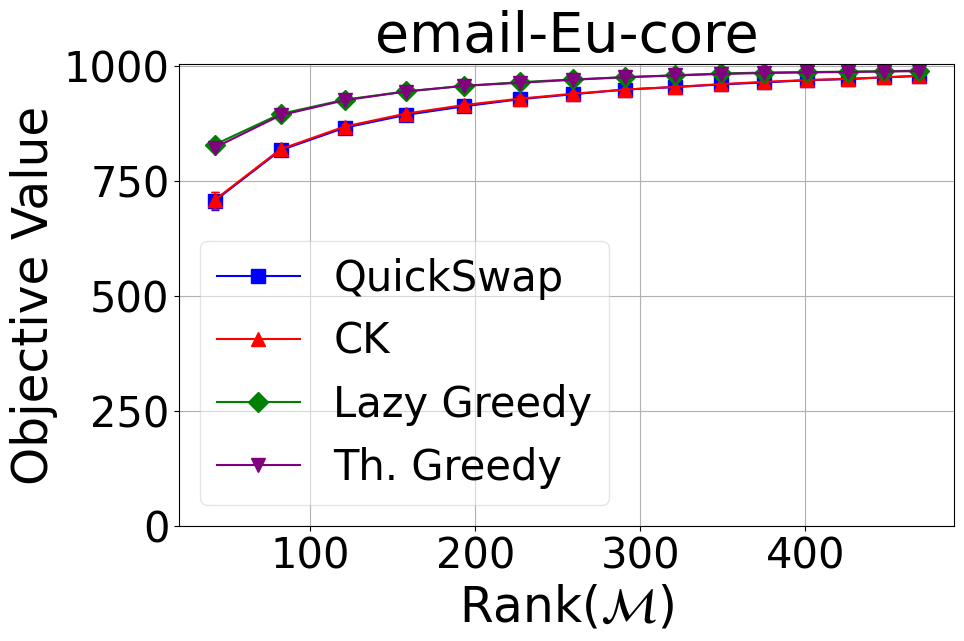}
		\includegraphics[width=0.33\textwidth]{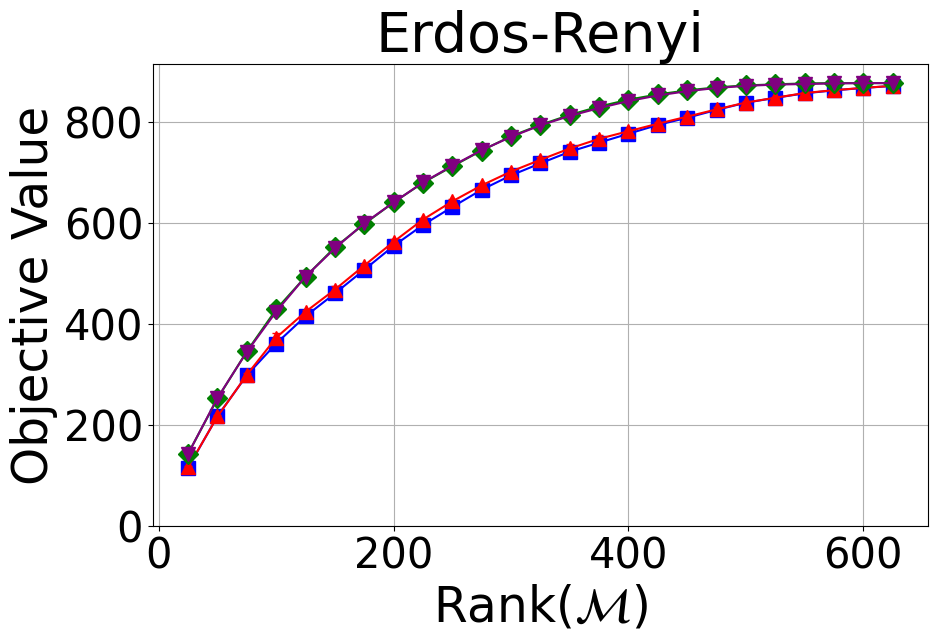}
		\includegraphics[width=0.33\textwidth]{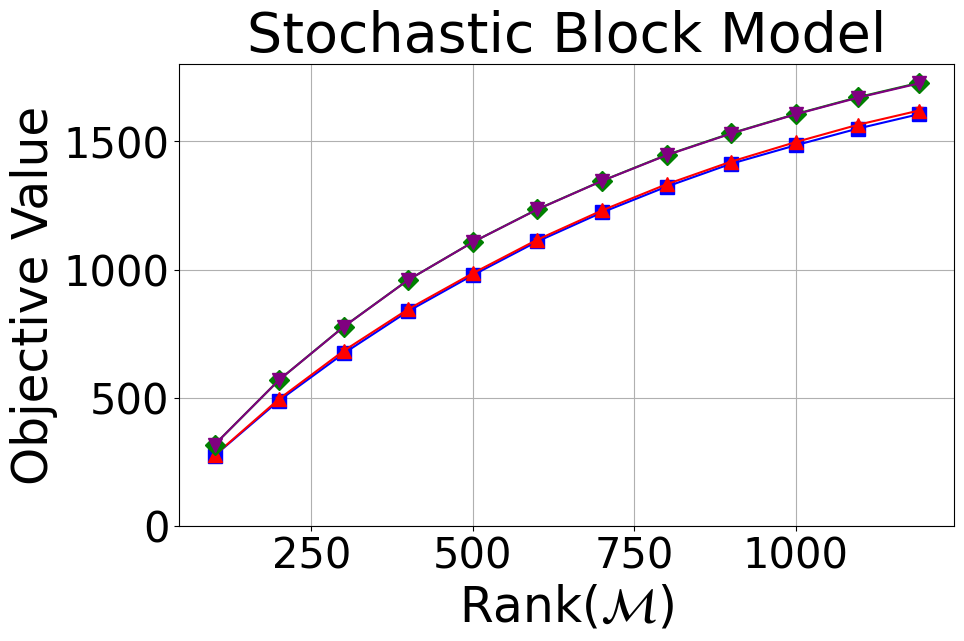}
	\end{subfigure} %
	\caption{The numbers of queries and objective values achieves by the four algorithms for \mon{} as a function of rank of the matroid. For the objective value plots, lazy greedy and threshold greedy achieve nearly identical objective values and their lines overlap. For the same reason, the lines for \quickmax{} and CK also overlap for these plots.}
	\label{fig:InfluenceAlgs}
\end{figure*}

\paragraph{Problem instances.} We consider instances of influence maximization subject to a partition matroid constraint on various graphs, similarly to the experiment setting in \cite{balkanski2018nonmonotone} and \cite{dutting2022deletion}.  In particular, given some graph $G = (V,E)$, we measure influence as defined by the coverage function $f: 2^V \rightarrow Z_{\geq 0}$ where $f(S) = | \{v \in V \, | \, \exists s \in S \text{ such that } (s,v) \in E \} |$. The three graphs over $n$ nodes and $m$ edges, and the partition matroids, are as follows (additional details in Appendix~\ref{sec:appexperiments}):
\begin{itemize}[leftmargin=*]
    \item \textbf{email-Eu-core}: The SNAP network email-Eu-core \citep{email-Eu-core} is a real-world directed social network with $n = 1005$ and $m = 25571$. The dataset also includes which of 41 possible departments each individual belongs to, with these sets forming the parts of the partition matroid.
    \item \textbf{Erdos-Renyi}: An Erdos-Renyi graph with $n = 1000$ and edge-formation probability $p = \frac{1}{500}$. The partition is generated by assigning each node to one of $20$ parts uniformly at random. 
    \item \textbf{Stochastic Block Model}: A randomly generated graph according to the stochastic block model (SBM) with $100$ communities that have uniformly random size between $10$ and $50$.  The parts of the  partition matroid are the 100 sets corresponding to these 100 communities. 
\end{itemize}

\paragraph{Results.} For the objective value plots, lazy greedy and threshold greedy achieve nearly identical objective values and their lines overlap. For the same reason, the lines for \quickmax{} and CK also overlap for these plots.  As shown in Figure \ref{fig:InfluenceAlgs}, \quickmax{} always achieves the least number of queries of all algorithms tested (exactly $n$ in each instance).  \quickmax{} significantly improves on the number of queries compared to  lazy greedy and  threshold greedy, at a small loss in objective value. More precisely, \quickmax{}  always achieves at least 80\%, and often significantly over 90\%, of the objective value achieved by lazy greedy and threshold greedy.  For the performance of \quickmax{} against the CK algorithm, it always requires fewer queries (and typically 20 to 30\% less for larger size constraints), while simultaneously achieving nearly identical objective values.


\section{Conclusion} \label{sec:conclusion}

In this work, we provide a deterministic algorithm for \mon{}
that achieves a constant approximation factor with exactly $n$ queries
to the oracle for the submodular function. However, our ratio of $1/4$
is smaller than the optimal ratio of $1 - 1/e \approx 0.63$.
An interesting question for future work is what is the best
ratio achievable with $\mathcal O(n)$ queries -- in particular, is it possible to
achieve the $1 - 1/e$ ratio, or even the $1/2$ ratio of the greedy algorithm with linear query complexity?
For the general, non-monotone problem \nmon{}, similar questions apply.
We provided the first constant-factor approximation
with linear query complexity, but our approximation ratio of $\approx 0.085$ in this setting is far from the best known ratio of $0.401$ \citep{Buchbinder2023} in polynomial time. 

Furthermore, in this paper we consider the value query model, and measure the efficiency of an algorithm by the number of queries made to the oracle.
We do not consider other metrics, such as the number of arithmetic operations or independence queries to the matroid. While computing the value of $f$ typically
dominates other parts of the computation in most applications of submodular optimization, this may not hold true for all applications. In addition, there are other aspects related to computational efficiency, such as the ability to optimize marginal gain queries. We believe our algorithm would be able to be highly optimized in such settings (as it only queries the marginal gain into a nested sequence of sets), but we did not attempt to evaluate this explicitly. 

\section*{Acknowledgements} Eric Balkanski was supported by NSF grants CCF-2210502 and IIS-2147361. The work of Alan Kuhnle was partially supported by Texas A\&M University. 


\newpage

\bibliographystyle{plainnat} 
\bibliography{alan-refs,References} 

\appendix



\section{Additional details about experimental setup}
\label{sec:appexperiments}

 \subsection{Additional details about the algorithms}
 \label{app_algos}
 
Below we provide specific details regarding the implementations of our algorithms from the experimental setup, including how queries were counted as well as specific data structures and parameter settings used. Note that in all implementations, once a given set has been queried for its value and causes the query count to increase by 1, this set will never again contribute to an increase in query count for any repeated computation which requires its evaluation.

\begin{itemize}
    \item \textbf{\quickmax{}:} Implemented exactly as in Algorithm \ref{algo:original-matroid}, with $\beta = 1$. 

    \item \textbf{CK Algorithm:} This algorithm by \citet{Chakrabarti2015b} takes one pass through the ground set and maintains a feasible
set through the following swapping logic. Each element is assigned a weight (which requires at
most two queries to the oracle), and the feasible solution is updated via appealing to an algorithm
of \citet{Ashwinkumar2011} for maximum (modular) weight independent set. In particular, the weight of an element $e$ at the time of its consideration is set as its marginal contribution to the current maintained feasible, and it maintains this weight assignment throughout the entire duration of the algorithm. If $e$ can be added to this feasible set while maintaining its feasibility, it is added. Otherwise, it checks to see if the weight of $e$ is at least twice the weight of the minimum-weight element in the current feasible set which can be swapped with $e$ while maintaining feasibility. If this is the case, the swap occurs, and otherwise $e$ is not added to the feasible set and it remains unchanged. 

    \item \textbf{Lazy Greedy:} Initially, the value of each singelton set is queried, with these corresponding values being assigned as priorties to each corresponding element and pushed onto a max heap. The initial query count is thus set to $n$, and the top element of the heap is popped and added to the maintained independent set $S$. From this point on, in each iteration the top element $e$ is popped from the max heap and we check if $e$ is the same element that was most recently popped. If so, we add $e$ to $S$. Otherwise, we check if $S+e$ is independent.  If so, $e$ is pushed back onto the heap with priority $f(e | S)$ (and the query count is incremented by 1, not 2, as $f(S)$ must have already been queried as this computation only requires we newly query $f(S+e)$), and otherwise $e$ is not pushed onto the heap nor queried for its marginal contribution and we continue. This process continues until the heap is empty, i.e. no new element can be added to $S$ while maintaining its independence.  

    \item \textbf{Threshold Greedy:} Two instances of threshold greedy were tested, for corresponding values of $\epsi = 0.1$ and $\epsi = \frac{1}{6}$. The value $\epsi = \frac{1}{6}$ was chosen as it corresponds to the threshold value guaranteeing a $\frac{1}{4}$-approximation ratio. The value $\epsi = 0.1$ was chosen to improve solution quality, at the potential risk of increasing the number of required queries. As these two parameters led to nearly identical performance (visually indistinguishable behavior in the plots), the version of threshold greedy implemented in this paper is for $\epsi = \frac{1}{6}$. Our implementation is based on that of Algorithm 1 in \citet{accelerated_greedy}, which is a threshold greedy algorithm for monotone submodular maximization subject to a cardinality constraint, except that in each instance the algorithm checks $S+e$ satisfies the cardinality constraint we instead check that the appropriate matroid constraint is satisfied, and the lower-limit on the threshold at which point we stop checking is set to $\frac{\epsi}{r}$ where $r$ is the rank of the matroid. Furthermore, when we count queries, for each element we lazily store its last queried marginal contribution, so that it is only reevaluated should it be at least as large as the current threshold (as otherwise submodularity guarantees it must still be less than the current threshold). 
\end{itemize}

 \subsection{Additional details about the problem instances} \label{app_problems}

  For each graph $G$, the partition matroid constraints considered are defined by a fixed partition of $V$ into sets $P_1, \ldots, P_m$ (whose union is $V$ and pairwise intersection is always empty) and corresponding nonnegative integers $k_1, \ldots, k_m$ such that any set $I \subset V$ is independent if and only if $I \cap P_i \leq k_i$ for each $i$. For our experiments, we consider partition matroids with fixed size constraints, i.e. such that $k_1 = \ldots = k_m$. For the instances, we consider partition matroids with positive integer size constraints from $1$ up to $15$, $25$ and $12$ respectively.  The three graphs  and the partition matroids are as follows:
\begin{itemize}[leftmargin=*]
    \item \textbf{email-Eu-core}: The SNAP network email-Eu-core \citep{email-Eu-core} representing email exchanges between members of a large European research institution, where each node corresponds to a member of the institution and each edge points from email senders to their recipients. The dataset also includes which of 41 possible departments each individual belongs to, with these sets forming the parts of the partition underlying each partition matroid considered. There are 1005 nodes and 25571 edges in this graph. Note that this dataset is released under the BSD license on SNAP, meaning it is free for both academic and commercial use.
    \item \textbf{Erdos-Renyi}: An Erdos-Renyi graph of 1000 nodes with edge-formation probability $p = \frac{1}{500}$, and a partition generated by assigning each node to one of 20 parts uniformly at random. 
    \item \textbf{Stochastic Block Model}: A randomly generated graph according to the stochastic block model (SBM), with 100 communities set to have sizes chosen uniformly random from 10 to 50 inclusive, with edge formation probability of 0 between nodes in different communities and $\frac{1}{30}$ between nodes in the same community. The parts of the underlying partition are the 100 sets corresponding to these 100 communities. 
\end{itemize}

\subsection{Mean ± standard deviation tables} \label{stdev_appendix}

As each data point plotted corresponds to a mean of 5 trials, the corresponding standard deviations were computed. However, each was too small to be visible on the plots, with most standard deviations being less than 1\% of their corresponding mean. Thus, these values, along with the exact value of the mean, are provided in Tables 1, 2, 3, 4, 5 and 6.

\subsection{Compute resources} \label{compute_appendix}

All code was run in a Jupyter Notebook, and took no more than 3 hours total to run on a standard Mac with an M1 chip.

\begin{table}[ht]
\centering
\caption{Mean ± Standard Deviation of Queries on email-Eu-core}
\begin{adjustbox}{width=0.7\textwidth}
\begin{tabular}{|c|c|c|c|c|}
\hline
Matroid Rank & QuickSwap & CK & Lazy Greedy & Threshold Greedy \\
\hline
42 & 1005.0 ± 0.0 & 1032.4 ± 3.1 & 1840.0 ± 0.0 & 1815.4 ± 15.5 \\
\hline
82 & 1005.0 ± 0.0 & 1070.0 ± 7.8 & 2439.0 ± 0.0 & 2368.0 ± 33.4 \\
\hline
121 & 1005.0 ± 0.0 & 1107.0 ± 12.0 & 2661.6 ± 0.5 & 2620.4 ± 25.6 \\
\hline
158 & 1005.0 ± 0.0 & 1159.8 ± 15.7 & 2782.8 ± 1.2 & 2749.4 ± 25.0 \\
\hline
193 & 1005.0 ± 0.0 & 1225.2 ± 29.0 & 2907.2 ± 0.7 & 2824.4 ± 29.5 \\
\hline
227 & 1005.0 ± 0.0 & 1264.0 ± 21.7 & 2998.8 ± 1.3 & 2877.0 ± 26.6 \\
\hline
259 & 1005.0 ± 0.0 & 1292.4 ± 22.1 & 3074.4 ± 1.6 & 2905.2 ± 26.5 \\
\hline
291 & 1005.0 ± 0.0 & 1309.4 ± 16.5 & 3124.0 ± 1.1 & 2935.6 ± 26.5 \\
\hline
321 & 1005.0 ± 0.0 & 1321.6 ± 18.0 & 3182.8 ± 0.7 & 2956.8 ± 30.6 \\
\hline
349 & 1005.0 ± 0.0 & 1341.0 ± 21.3 & 3215.2 ± 2.0 & 2971.8 ± 32.6 \\
\hline
375 & 1005.0 ± 0.0 & 1366.4 ± 24.7 & 3245.8 ± 0.7 & 2981.2 ± 30.9 \\
\hline
401 & 1005.0 ± 0.0 & 1376.2 ± 24.4 & 3277.2 ± 1.3 & 2990.4 ± 32.5 \\
\hline
426 & 1005.0 ± 0.0 & 1376.6 ± 28.2 & 3311.2 ± 0.4 & 2995.0 ± 34.1 \\
\hline
448 & 1005.0 ± 0.0 & 1382.8 ± 30.7 & 3336.2 ± 0.4 & 2998.0 ± 34.2 \\
\hline
469 & 1005.0 ± 0.0 & 1388.0 ± 28.3 & 3366.2 ± 2.0 & 3001.8 ± 35.8 \\
\hline

\hline
\end{tabular}
\end{adjustbox}
\label{tab:std_dev_queries}
\end{table}

\begin{table}[ht]
\centering
\caption{Mean ± Standard Deviation of Queries on ER}
\begin{adjustbox}{width=0.7\textwidth}
\begin{tabular}{|c|c|c|c|c|}
\hline
Matroid Rank & QuickSwap & CK & Lazy Greedy & Threshold Greedy \\
\hline
25 & 1000.0 ± 0.0 & 1026.6 ± 2.7 & 1032.4 ± 1.9 & 1033.0 ± 2.6 \\
\hline
50 & 1000.0 ± 0.0 & 1049.2 ± 3.6 & 1084.8 ± 3.4 & 1085.8 ± 4.1 \\
\hline
75 & 1000.0 ± 0.0 & 1067.8 ± 7.0 & 1145.8 ± 5.2 & 1152.6 ± 6.4 \\
\hline
100 & 1000.0 ± 0.0 & 1090.2 ± 8.0 & 1222.2 ± 7.9 & 1238.8 ± 14.6 \\
\hline
125 & 1000.0 ± 0.0 & 1102.6 ± 6.9 & 1304.2 ± 5.8 & 1329.0 ± 10.4 \\
\hline
150 & 1000.0 ± 0.0 & 1112.0 ± 8.2 & 1386.0 ± 9.6 & 1449.0 ± 14.1 \\
\hline
175 & 1000.0 ± 0.0 & 1127.0 ± 5.8 & 1493.2 ± 7.0 & 1551.0 ± 15.7 \\
\hline
200 & 1000.0 ± 0.0 & 1143.4 ± 4.5 & 1584.0 ± 12.1 & 1648.4 ± 14.4 \\
\hline
225 & 1000.0 ± 0.0 & 1157.4 ± 1.9 & 1658.6 ± 16.7 & 1739.6 ± 11.0 \\
\hline
250 & 1000.0 ± 0.0 & 1171.6 ± 6.2 & 1736.8 ± 9.9 & 1820.0 ± 9.8 \\
\hline
275 & 1000.0 ± 0.0 & 1181.4 ± 8.1 & 1794.6 ± 12.4 & 1899.6 ± 10.5 \\
\hline
300 & 1000.0 ± 0.0 & 1191.4 ± 9.0 & 1868.2 ± 15.7 & 1967.0 ± 13.3 \\
\hline
325 & 1000.0 ± 0.0 & 1204.6 ± 7.5 & 1948.4 ± 10.0 & 2030.2 ± 14.0 \\
\hline
350 & 1000.0 ± 0.0 & 1216.8 ± 6.6 & 2011.8 ± 14.2 & 2083.0 ± 9.1 \\
\hline
375 & 1000.0 ± 0.0 & 1233.0 ± 7.8 & 2084.4 ± 7.1 & 2133.2 ± 10.9 \\
\hline
400 & 1000.0 ± 0.0 & 1256.8 ± 8.7 & 2144.2 ± 11.9 & 2175.6 ± 15.6 \\
\hline
425 & 1000.0 ± 0.0 & 1274.4 ± 10.3 & 2218.6 ± 14.4 & 2215.6 ± 15.3 \\
\hline
450 & 1000.0 ± 0.0 & 1291.4 ± 19.4 & 2275.0 ± 11.4 & 2245.8 ± 13.4 \\
\hline
475 & 1000.0 ± 0.0 & 1308.0 ± 19.7 & 2319.6 ± 10.7 & 2271.4 ± 6.7 \\
\hline
500 & 1000.0 ± 0.0 & 1325.4 ± 17.0 & 2359.4 ± 12.7 & 2293.4 ± 7.6 \\
\hline
525 & 1000.0 ± 0.0 & 1336.0 ± 13.0 & 2401.8 ± 9.3 & 2308.6 ± 5.5 \\
\hline
550 & 1000.0 ± 0.0 & 1347.4 ± 13.2 & 2439.8 ± 6.8 & 2315.4 ± 5.3 \\
\hline
575 & 1000.0 ± 0.0 & 1353.8 ± 11.6 & 2472.2 ± 3.3 & 2319.4 ± 5.0 \\
\hline
600 & 1000.0 ± 0.0 & 1345.8 ± 12.0 & 2500.2 ± 4.3 & 2323.0 ± 3.4 \\
\hline
625 & 1000.0 ± 0.0 & 1342.8 ± 10.2 & 2525.2 ± 4.3 & 2324.0 ± 3.6 \\
\hline

\hline
\end{tabular}
\end{adjustbox}
\label{tab:std_dev_queriestwo}
\end{table}

\begin{table}[ht]
\centering
\caption{Mean ± Standard Deviation of Queries on SBM}
\begin{adjustbox}{width=0.7\textwidth}
\begin{tabular}{|c|c|c|c|c|}
\hline
Matroid Rank & QuickSwap & CK & Lazy Greedy & Threshold Greedy \\
\hline
100 & 2992.0 ± 0.0 & 3150.0 ± 8.9 & 3091.0 ± 0.0 & 3110.0 ± 0.0 \\
\hline
200 & 2992.0 ± 0.0 & 3287.0 ± 16.2 & 3210.8 ± 2.5 & 3232.4 ± 1.4 \\
\hline
300 & 2992.0 ± 0.0 & 3410.8 ± 7.6 & 3359.4 ± 2.0 & 3394.2 ± 3.5 \\
\hline
400 & 2992.0 ± 0.0 & 3525.2 ± 13.1 & 3524.8 ± 8.1 & 3563.4 ± 6.6 \\
\hline
500 & 2992.0 ± 0.0 & 3625.2 ± 25.0 & 3726.0 ± 6.3 & 3736.8 ± 13.1 \\
\hline
600 & 2992.0 ± 0.0 & 3684.0 ± 33.0 & 3892.6 ± 6.0 & 3902.6 ± 7.5 \\
\hline
700 & 2992.0 ± 0.0 & 3733.6 ± 29.5 & 4071.2 ± 4.1 & 4071.8 ± 5.3 \\
\hline
800 & 2992.0 ± 0.0 & 3781.6 ± 25.0 & 4249.2 ± 13.5 & 4236.6 ± 8.5 \\
\hline
900 & 2992.0 ± 0.0 & 3823.0 ± 23.7 & 4428.0 ± 19.0 & 4373.6 ± 15.0 \\
\hline
1000 & 2992.0 ± 0.0 & 3842.8 ± 14.7 & 4606.6 ± 15.7 & 4501.8 ± 11.8 \\
\hline
1096 & 2992.0 ± 0.0 & 3858.8 ± 11.1 & 4757.0 ± 12.4 & 4636.0 ± 11.4 \\
\hline
1191 & 2992.0 ± 0.0 & 3873.2 ± 8.6 & 4919.2 ± 10.1 & 4766.8 ± 13.6 \\
\hline

\hline
\end{tabular}
\end{adjustbox}
\label{tab:std_dev_queriesthree}
\end{table}

\begin{table}[ht]
\centering
\caption{Mean ± Standard Deviation of Objective Values on email-Eu-core}
\begin{adjustbox}{width=0.7\textwidth}
\begin{tabular}{|c|c|c|c|c|}
\hline
Matroid Rank & QuickSwap & CK & Lazy Greedy & Threshold Greedy \\
\hline
42 & 706.6 ± 20.5 & 708.4 ± 18.7 & 829.0 ± 0.0 & 822.4 ± 4.5 \\
\hline
82 & 817.0 ± 6.1 & 819.2 ± 4.9 & 896.0 ± 0.0 & 893.0 ± 1.3 \\
\hline
121 & 866.0 ± 4.2 & 868.6 ± 4.4 & 927.0 ± 0.0 & 925.8 ± 1.7 \\
\hline
158 & 893.8 ± 2.5 & 896.6 ± 3.1 & 945.0 ± 0.0 & 944.6 ± 1.4 \\
\hline
193 & 912.4 ± 3.8 & 915.2 ± 4.8 & 957.0 ± 0.0 & 957.0 ± 0.6 \\
\hline
227 & 928.0 ± 5.4 & 929.2 ± 6.4 & 965.0 ± 0.0 & 963.8 ± 1.5 \\
\hline
259 & 938.6 ± 2.9 & 939.8 ± 4.0 & 971.0 ± 0.0 & 970.2 ± 1.0 \\
\hline
291 & 949.0 ± 2.9 & 949.0 ± 3.0 & 976.0 ± 0.0 & 975.8 ± 1.7 \\
\hline
321 & 954.4 ± 3.1 & 954.8 ± 2.9 & 980.0 ± 0.0 & 979.6 ± 2.1 \\
\hline
349 & 960.0 ± 3.1 & 960.8 ± 2.9 & 984.0 ± 0.0 & 982.8 ± 1.7 \\
\hline
375 & 964.4 ± 4.1 & 965.8 ± 4.3 & 986.0 ± 0.0 & 985.0 ± 1.1 \\
\hline
401 & 969.2 ± 3.9 & 969.6 ± 4.2 & 987.0 ± 0.0 & 986.2 ± 0.4 \\
\hline
426 & 971.8 ± 2.6 & 972.4 ± 3.3 & 988.0 ± 0.0 & 987.2 ± 0.7 \\
\hline
448 & 975.4 ± 2.8 & 975.6 ± 2.9 & 989.0 ± 0.0 & 988.2 ± 0.7 \\
\hline
469 & 978.8 ± 2.2 & 979.0 ± 2.3 & 990.0 ± 0.0 & 989.0 ± 0.9 \\
\hline

\hline
\end{tabular}
\end{adjustbox}
\label{tab:std_dev_values}
\end{table}

\begin{table}[ht]
\centering
\caption{Mean ± Standard Deviation of Objective Values on ER}
\begin{adjustbox}{width=0.7\textwidth}
\begin{tabular}{|c|c|c|c|c|}
\hline
Matroid Rank & QuickSwap & CK & Lazy Greedy & Threshold Greedy \\
\hline
25 & 114.8 ± 1.9 & 116.0 ± 2.8 & 142.6 ± 1.0 & 142.4 ± 1.0 \\
\hline
50 & 216.8 ± 3.8 & 216.2 ± 3.9 & 251.8 ± 1.2 & 253.0 ± 0.6 \\
\hline
75 & 297.8 ± 3.1 & 298.6 ± 3.8 & 345.0 ± 1.1 & 344.0 ± 1.7 \\
\hline
100 & 360.6 ± 5.7 & 372.2 ± 9.0 & 428.2 ± 0.4 & 423.8 ± 1.0 \\
\hline
125 & 414.6 ± 2.8 & 423.4 ± 2.7 & 492.8 ± 1.7 & 493.2 ± 2.0 \\
\hline
150 & 459.8 ± 7.5 & 467.4 ± 6.9 & 551.0 ± 4.4 & 549.0 ± 3.1 \\
\hline
175 & 506.6 ± 5.4 & 515.0 ± 2.6 & 597.2 ± 2.5 & 598.6 ± 2.8 \\
\hline
200 & 553.8 ± 6.5 & 562.0 ± 5.9 & 640.0 ± 2.3 & 639.8 ± 3.2 \\
\hline
225 & 595.2 ± 5.8 & 606.0 ± 3.6 & 678.8 ± 2.1 & 679.4 ± 2.4 \\
\hline
250 & 630.8 ± 6.0 & 642.0 ± 4.0 & 711.2 ± 2.0 & 712.4 ± 1.9 \\
\hline
275 & 665.0 ± 5.9 & 673.6 ± 3.0 & 742.6 ± 1.9 & 743.2 ± 1.0 \\
\hline
300 & 693.4 ± 4.4 & 701.0 ± 3.7 & 770.8 ± 1.9 & 769.6 ± 1.4 \\
\hline
325 & 717.2 ± 4.2 & 724.2 ± 6.5 & 793.4 ± 2.1 & 792.4 ± 1.4 \\
\hline
350 & 740.2 ± 7.7 & 746.6 ± 8.8 & 813.0 ± 3.8 & 811.0 ± 2.6 \\
\hline
375 & 758.2 ± 8.2 & 765.8 ± 7.4 & 829.6 ± 3.3 & 826.6 ± 4.3 \\
\hline
400 & 775.8 ± 6.0 & 780.8 ± 3.2 & 843.2 ± 3.2 & 840.4 ± 4.0 \\
\hline
425 & 793.2 ± 5.2 & 795.6 ± 4.4 & 853.6 ± 1.6 & 851.6 ± 3.7 \\
\hline
450 & 807.4 ± 4.8 & 809.6 ± 3.9 & 861.6 ± 1.5 & 860.0 ± 3.0 \\
\hline
475 & 823.0 ± 3.9 & 824.0 ± 3.9 & 867.6 ± 2.1 & 866.4 ± 3.1 \\
\hline
500 & 837.2 ± 3.7 & 837.6 ± 3.1 & 871.4 ± 1.0 & 870.6 ± 2.3 \\
\hline
525 & 846.8 ± 2.0 & 847.0 ± 1.9 & 873.6 ± 0.5 & 873.0 ± 2.1 \\
\hline
550 & 855.8 ± 2.6 & 856.4 ± 2.7 & 875.4 ± 0.5 & 874.2 ± 1.5 \\
\hline
575 & 861.6 ± 2.8 & 862.0 ± 2.6 & 876.0 ± 0.0 & 875.4 ± 0.8 \\
\hline
600 & 866.4 ± 1.5 & 866.6 ± 1.4 & 876.0 ± 0.0 & 876.0 ± 0.0 \\
\hline
625 & 870.4 ± 1.4 & 870.6 ± 1.5 & 876.0 ± 0.0 & 876.0 ± 0.0 \\
\hline

\hline
\end{tabular}
\end{adjustbox}
\label{tab:std_dev_valuestwo}
\end{table}

\begin{table}[ht]
\centering
\caption{Mean ± Standard Deviation of Objective Values on SBM}
\begin{adjustbox}{width=0.7\textwidth}
\begin{tabular}{|c|c|c|c|c|}
\hline
Matroid Rank & QuickSwap & CK & Lazy Greedy & Threshold Greedy \\
\hline
100 & 273.2 ± 3.5 & 274.4 ± 3.0 & 316.0 ± 0.0 & 316.0 ± 0.0 \\
\hline
200 & 488.6 ± 8.5 & 494.8 ± 7.2 & 569.0 ± 0.0 & 569.0 ± 0.0 \\
\hline
300 & 672.4 ± 9.8 & 682.6 ± 11.9 & 777.0 ± 0.0 & 776.2 ± 0.4 \\
\hline
400 & 837.4 ± 4.6 & 845.4 ± 9.0 & 959.6 ± 1.0 & 959.4 ± 1.0 \\
\hline
500 & 978.0 ± 8.2 & 985.4 ± 9.1 & 1107.0 ± 0.9 & 1106.8 ± 1.3 \\
\hline
600 & 1109.6 ± 9.5 & 1116.6 ± 8.3 & 1235.4 ± 1.0 & 1234.6 ± 1.9 \\
\hline
700 & 1222.8 ± 6.5 & 1230.6 ± 6.8 & 1346.2 ± 1.2 & 1345.6 ± 2.2 \\
\hline
800 & 1323.4 ± 4.5 & 1332.8 ± 2.8 & 1447.4 ± 1.2 & 1446.0 ± 2.4 \\
\hline
900 & 1413.0 ± 3.0 & 1421.2 ± 3.2 & 1532.0 ± 0.6 & 1530.6 ± 2.3 \\
\hline
1000 & 1485.0 ± 8.3 & 1497.4 ± 5.3 & 1607.0 ± 1.1 & 1605.8 ± 2.3 \\
\hline
1096 & 1550.0 ± 10.3 & 1564.8 ± 10.4 & 1672.4 ± 1.6 & 1670.4 ± 1.7 \\
\hline
1191 & 1606.4 ± 10.4 & 1620.6 ± 13.3 & 1728.8 ± 1.5 & 1726.6 ± 1.6 \\
\hline

\hline
\end{tabular}
\end{adjustbox}
\label{tab:std_dev_valuesthree}
\end{table}

\end{document}